\newcommand\myshade{85}
\colorlet{mylinkcolor}{BrickRed}
\colorlet{mycitecolor}{NavyBlue}
\colorlet{myurlcolor}{Aquamarine}
\let\oldparagraph\paragraph
\renewcommand\paragraph[1]{\vskip0.5cm\oldparagraph{#1}}
\let\emptyset\varnothing
\newcommand\bX{\ensuremath{\bm X_t}\xspace}
\newcommand\bY{\ensuremath{\bm X_{t'}}}\xspace
\newcommand\Rk{\ensuremath{\mathcal{R}^{(k)}}}\xspace
\newcommand\Sk{\ensuremath{\mathcal{S}^{(k)}}}\xspace
\newcommand\Uk{\ensuremath{\mathcal{U}^{(k)}}}\xspace
\newcommand{\splitatcommas}[1]{%
\begingroup
\begingroup\lccode`~=`, \lowercase{\endgroup
    \edef~{\mathchar\the\mathcode`, \penalty0 \noexpand\hspace{0pt plus 1em}}%
    }\mathcode`,="8000 #1%
    \endgroup
}
\definecolor{darkgreen}{RGB}{50,200,8}
\newcommand\independent{\protect\mathpalette{\protect\independenT}{\perp}}
\def\independenT#1#2{\mathrel{\rlap{$#1#2$}\mkern2mu{#1#2}}}
\newtheorem{definition}{Definition}
\newtheorem{theorem}{Theorem}
\newtheorem{lemma}{Lemma}
\newtheorem{proposition}{Proposition}
\newtheorem{corollary}{Corollary}
\definecolor{BrewerRed}{RGB}{228,26,28}
\definecolor{BrewerGreen}{RGB}{77,175,74}
\definecolor{BrewerBlue}{RGB}{55,126,184}
\definecolor{BrewerPurple}{RGB}{152,78,163}
\definecolor{BrewerOrange}{RGB}{255,127,0}
\definecolor{BrewerYellow}{RGB}{215,215,41}
\definecolor{BrewerGrey}{RGB}{153,153,153}
\newcommand{\includetikz}[2]{%
  \includegraphics{#2}
%   \tikzsetnextfilename{#2}%
%   \filemodCmp{#1#2.tex}{#1external/#2.pdf}%
%     {\tikzset{external/remake next}}{}%
%   \input{#1#2.tex}%
}
\definecolor{lightblue}{RGB}{166,206,227}
\definecolor{darkblue}{RGB}{31 ,120,180}
\definecolor{lightred}{RGB}{251,154,153}
\definecolor{darkred}{RGB}{227,26,28}
\begin{document}
\title{Reconciling emergences: An information-theoretic approach \\
to identify causal emergence in multivariate data}

\author{Fernando E. Rosas} 
\thanks{F.R. and P.M. contributed equally to this work.\\E-mail: pam83@cam.ac.uk, f.rosas@imperial.ac.uk}
\affiliation{Center for Psychedelic Research, Department of Brain Science, Imperial College London, SW7 2DD, UK}
\affiliation{Data Science Institute, Imperial College London, London SW7 2AZ, UK}
\affiliation{Center for Complexity Science, Imperial College London, London SW7 2AZ, UK}

\author{Pedro A.M. Mediano}
\thanks{F.R. and P.M. contributed equally to this work.\\E-mail: pam83@cam.ac.uk, f.rosas@imperial.ac.uk}
\affiliation{Department of Psychology, University of Cambridge, Cambridge CB2 3EB, UK}

\author{Henrik J. Jensen}
\affiliation{Center for Complexity Science, Imperial College London, London SW7 2AZ, UK}
\affiliation{Department of Mathematics, Imperial College London, London SW7 2AZ, UK}
\affiliation{Institute of Innovative Research, Tokyo Institute of Technology, Yokohama 226-8502, Japan}

\author{Anil~K.~Seth}
\affiliation{Sackler Center for Consciousness Science, Department of Informatics, University of Sussex, Brighton BN1 9RH, UK}
\affiliation{CIFAR Azrieli Program on Brain, Mind, and Consciousness, Toronto M5G 1M1, Canada}

\author{Adam B. Barrett}
\affiliation{Sackler Center for Consciousness Science, Department of Informatics, University of Sussex, Brighton BN1 9RH, UK}
\affiliation{The Data Intensive Science Centre, Department of Physics and Astronomy, University of Sussex, Brighton BN1 9QH, UK}

\author{Robin L. Carhart-Harris} 
\affiliation{Center for Psychedelic Research, Department of Brain Science, Imperial College London, London SW7 2DD, UK}

\author{Daniel Bor}
\affiliation{Department of Psychology, University of Cambridge, Cambridge CB2 3EB, UK}

\begin{abstract}

\noindent The broad concept of \emph{emergence} is instrumental in various of
the most challenging open scientific questions -- yet, few quantitative
theories of what constitutes emergent phenomena have been proposed. This
article introduces a formal theory of causal emergence in multivariate systems,
which studies the relationship between the dynamics of parts of a system and
macroscopic features of interest. Our theory provides a quantitative definition
of \emph{downward causation}, and introduces a complementary modality of
emergent behaviour -- which we refer to as \emph{causal decoupling}. Moreover,
the theory allows practical criteria that can be efficiently calculated in
large systems, making our framework applicable in a range of scenarios of
practical interest. We illustrate our findings in a number of case studies,
including Conway's Game of Life, Reynolds' flocking model, and neural activity
as measured by electrocorticography.

\end{abstract}

\maketitle

While most of our representations of the physical world are hierarchical, there
is still no agreement on how the co-existing ``layers'' of this hierarchy
interact. On the one hand, \textit{reductionism} claims that all levels can
always be explained based on sufficient knowledge of the lowest scale and,
consequently -- taking an intentionally extreme example -- that a sufficiently
accurate theory of elementary particles should be able to predict the existence
of social phenomena like communism. On the other hand, \textit{emergentism}
argues that there can be autonomy between layers, i.e. that some phenomena at
macroscopic layers might only be accountable in terms of other macroscopic
phenomena. While emergentism might seem to better serve our intuition, it is
not entirely clear how a rigorous theory of emergence could be formulated
within our modern scientific worldview, which tends to be dominated by
reductionist principles.

Emergent phenomena are usually characterised as either strong or
weak~\cite{routledge2019}. \emph{Strong emergence} corresponds to the somewhat
paradoxical case of supervenient properties with irreducible causal
power~\cite{bedau2002downward}; i.e. properties that are fully determined by
microscopic levels but can nevertheless exert causal influences that are not
entirely accountable from microscopic considerations~\footnote{The case of
strong emergence most commonly argued in the literature is the one of conscious
experiences with respect to their corresponding physical
substrate~\cite{chalmers2006strong,chang2019information}.}. Strong emergence
has been as much a cause of wonder as a perennial source of philosophical
headaches, being described as ``uncomfortably like magic'' while accused of
being logically inconsistent ~\cite{bedau2002downward} and sustained on
illegitimate metaphysics~\cite{bedau1997weak}. \emph{Weak emergence} has been
proposed as a more docile alternative to strong emergence, where macroscopic
features have irreducible causal power in practice but not in principle. A
popular formulation of weak emergence is due to Bedau~\cite{bedau1997weak}, and
corresponds to properties generated by elements at microscopic levels in such
complicated ways that they cannot be derived via explanatory shortcuts, but
only by exhaustive simulation. While this formulation is usually accepted by
the scientific community, it is not well-suited to address mereological
questions about emergence in scenarios where parts-whole relationships are the
primary interest.

Part of the difficulty in building a deeper understanding of strong emergence
is the absence of simple but clear analytical models that can serve the
community to guide discussions and mature theories. Efforts have been made to
introduce quantitative metrics of weak emergence~\cite{seth2010measuring},
which enable fine-grained data-driven alternatives to traditional all-or-none
classifications. In this vein, an attractive alternative comes from the work on
\emph{causal emergence} introduced in Ref.~\cite{hoel2013quantifying} and later
developed in Refs.~\cite{hoel2017map,klein2020emergence}, which showed that
macroscopic observables can sometimes exhibit more causal power (as understood
within the framework of Pearl's \textit{do-calculus}~\cite{pearl2000causality})
than microscopic variables. However, this framework relies on strong
assumptions that are rarely satisfied in practice, which severely hinders its
applicability~\footnote{This point is further ellaborated in
Section~\ref{sec:hoel}}.

Inspired by Refs.~\cite{seth2010measuring,hoel2013quantifying}, here we
introduce a practically useful and philosophically innocent framework to study
causal emergence in multivariate data. Building on previous
work~\cite{rosas2018selforg}, we take the perspective of an experimentalist who
has no prior knowledge of the underlying phenomenon of interest, but has
sufficient data of all relevant variables that allows an accurate statistical
description of the phenomenon. In this context, we put forward a formal
definition of causal emergence that doesn't rely on coarse-graining functions
as Ref.~\cite{hoel2013quantifying}, but addresses the ``paradoxical''
properties of strong emergence based on the laws of information flow in
multivariate systems.

The main contribution of this work is to enable a rigorous, quantitative
definition of \emph{downward causation}, and introduce a novel notion of
\emph{causal decoupling} as a complementary modality of causal emergence.
Another contribution is to extend the domain of applicability of causal
emergence analyses to include cases of observational data, in which case
causality ought to be understood in the Granger sense, i.e. as predictive
ability~\cite{bressler2011wiener}. Furthermore, our framework yields practical
criteria that can be effectively applied to large systems, bypassing
prohibitive estimation issues that severely restrict previous approaches.

The rest of this paper is structured as follows. First,
Section~\ref{sec:examples} introduces some fundamental intuitions by discussing
minimal examples of emergence. Then, Section~\ref{sec:formal_theory} presents
the core of our theory, and Section~\ref{sec:measuring} discusses practical
methods to measure emergence from experimental data. Our framework is then
illustrated on a number of case studies, presented in
Section~\ref{sec:case_studies}. Finally, Section~\ref{sec:discussion} concludes
the paper with a discussion of some of the implications of our findings.

%%%%%%%%%%%%%%%%%%%%%%%%%%%%%%%%%%%%%%%%%%%%%%%%%%%%%%%%
%%%%%%%%%%%%%%%%%%%%%%%%%%%%%%%%%%%%%%%%%%%%%%%%%%%%%%%%
\section{Fundamental intuitions}
\label{sec:examples}

To ground our intuitions, let us introduce minimal examples that embody a few
key notions of causally emergent behaviour. Throughout this section, we
consider systems composed of $n$ parts described by a binary vector $\bm X_t =
(X_t^1,\dots,X_t^n) \in\{0,1\}^n$, which undergo Markovian stochastic dynamics
following a transition probability $p_{\bm X_{t+1}|\bm X_t}$. For simplicity,
we assume that at time $t$ the system is found in an entirely random
configuration (i.e. $p_{\bm X_t}(\bm x_t) =2^{-n}$). From there, we consider
three evolution rules.

\textbf{Example 1:} Consider a temporal evolution where the parity of $\bm X_t$
is preserved with probability $\gamma \in (0, 1)$. Mathematically,
\begin{equation}
p_{\bm X_{t+1}|\bm X_t}(\bm x_{t+1}|\bm x_t) = 
\begin{cases}
\frac{\gamma}{2^{n-1}} & \small{\text{if } \oplus_{j=1}^n x_{t+1}^j =  \oplus_{j=1}^n x_{t}^j ,} \\ 
\frac{1-\gamma}{2^{n-1}} & \small{\text{otherwise,}} ~
\end{cases}\nonumber
\end{equation}
for all $t\in\mathbb{N}$, where $\oplus_{j=1}^n a_j \coloneqq 1$ if
$\sum_{j=1}^n a_j$ is even and zero otherwise. Put simply: $\bm x_{t+1}$ is a
random sample from the set of all strings with the same parity as $\bm x_t$
with probability $\gamma$; and is a sample from the strings with opposite
parity with probability $1-\gamma$.

This evolution rule has a number of interesting properties. First, the system
has a non-trivial causal structure, since some properties of the future state
(its parity) can be predicted from the past state. However, this structure is
noticeable \emph{only} at the collective level, as no individual variable has
any predictive power over the evolution of itself or any other variable (see
Figure~\ref{fig:examples}). Furthermore, even the complete past of the system
$\bm X_t$ has no predictive power over any individual future $X_{t+1}^j$. This
case shows an extreme kind of causal emergence that we call ``causal
decoupling,'' in which the parity predicts its own evolution but no element (or
subset of elements) predicts the evolution of any other element.

\begin{figure}[t]
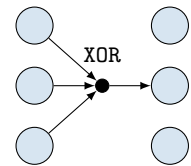

  \centering
  \includetikz{tikz/}{ExampleSystems}
  \caption{\textbf{Minimal examples of causally emergent dynamics}. In Example~1
  (\emph{left}) the system's parity tends to be preserved while no interactions
  occur between low-level elements, which is an example of causal decoupling. In
  Example~2 (\emph{right}) the system's parity determines one element only,
  corresponding to downward causation.}
  \label{fig:examples}
\end{figure}

\textbf{Example 2:} Consider now a system where the parity of $\bX$ determines
$X_{t+1}^1$ (i.e. $X^1_{t+1} = \oplus_{i=1}^n X^i_t$), and $X^j_{t+1}$ for
$j\neq 1$ is a fair coin flip independent of $\bX$ (see
Figure~\ref{fig:examples}). In this scenario $\bX$ predicts $X_{t+1}^1$ with
perfect accuracy, while it can be verified that $X_t^i \independent X_{t+1}^1$
for all $i\in\{1,\dots,n\}$. Therefore, under this evolution rule the whole
system has a causal effect over a particular element, although this effect
cannot be attributed to any individual part~\footnote{A related discussion
about ``synergistic information flow'' can be found in
Ref.~\cite{james2018modes}.}, being a minimal example of downward causation.

\textbf{Example 3:} Let us now study an evolution rule that includes the
mechanisms of both Examples~1 and 2. Concretely, consider
\begin{equation}
p_{\bm X_{t+1}|\bm X_t}(\bm x_{t+1}|\bm x_t) = 
\begin{cases}
0 & \small{ \text{if } x_{t+1}^1 \neq \oplus_{j=1}^n x_{t}^j, }\\
\frac{\gamma}{2^{n-2}} & \small{ \text{if } x_{t+1}^1 = \oplus_{j=1}^n x_{t}^j }\\
&\small{ \text{and } \oplus_{j=1}^n x_{t+1}^j =  \oplus_{j=1}^n x_{t}^j, }\\ 
\frac{1-\gamma}{2^{n-2}} & \small{ \text{otherwise.} }~
\end{cases}\nonumber
\end{equation}
As in Example~1, the parity of $\bX$ is transfered to $\bm X_{t+1}$ with
probability $\gamma$; additionally, it is guaranteed that $X_{t+1}^1 =
\oplus_{i=1}^n X_t^i$. Hence, in this case not only is there a macroscopic
effect that cannot be explained from the parts, but at the same time there is
another effect going from the whole to one of the parts. Importantly, both
effects \emph{co-exist} independently of each other.

The above are minimal examples of dynamical laws that cannot be traced from the
interactions between their elementary components: Example~1 shows how a
collective property can propagate without interacting with its underlying
substrate; Example~2 how a collective property can influence the evolution of
specific parts; and Example~3 how these two kinds of phenomena take place in
the same system. All these issues are formalised by the theory developed in the
next section.

%%%%%%%%%%%%%%%%%%%%%%%%%%%%%%%%%%%%%%%%%%%%%%%%%%%%%%%%
%%%%%%%%%%%%%%%%%%%%%%%%%%%%%%%%%%%%%%%%%%%%%%%%%%%%%%%%
\section{A formal theory of causal emergence}
\label{sec:formal_theory}

This section presents the main body of our theory of causal emergence. To fix
ideas, we consider a scientist measuring a system composed of $n$ parts. The
scientist is assumed to measure the system regularly over time, and the results
of those measurements are denoted by $\bm X_t = (X_t^1,\dots,X_t^n)$, with
$X_t^i\in \mathcal{X}_i$ corresponding to the state of the
$i$\textsuperscript{th} part at time $t\in \mathbb{N}$ with phase space
$\mathcal{X}_i$. When referring to a collection of parts, we use the notation
$\bm X_t^{\alpha} = (X_t^{i_1},\dots,X_t^{i_K})$ for $\alpha =
\{i_1,\dots,i_K\}\subset \{1,\dots,n\}$. We also use the shorthand notation
$[n] \coloneqq \{1,\dots,n\}$.

Our analysis considers two time points of the evolution of the system, denoted
as $t$ and $t'$, with $t< t'$. The corresponding dynamics are encoded in the
transition probability $p_{\bm X_{t'}|\bm X_t}(\bm x_{t'}|\bm x_t)$. We
consider features $V_t\in \mathcal{V}$ generated via a conditional probability
$p_{V_t|\bm X_t}$ that are \emph{supervenient} on the underlying system; i.e.
that does not provide any predictive power for future states at times $t'>t$ if
the complete state of the system at time $t$ is known with perfect precision.
We formalise this condition by requiring $V_t$ to be statistically independent
of $\bm X_{t'}$ when $\bm X_t$ is given (equivalently, that $V_t - \bm X_t -
\bm X_{t'}$ form a Markov chain), for all $t'> t$. This includes as
particular cases deterministic functions $F:\prod_{j=1}^n \mathcal{X}_j\to
\mathcal{V}$ such that $V_t = F(\bm X_t)$, as well as aggregate properties
affected by observational noise.

\begin{figure}[ht]
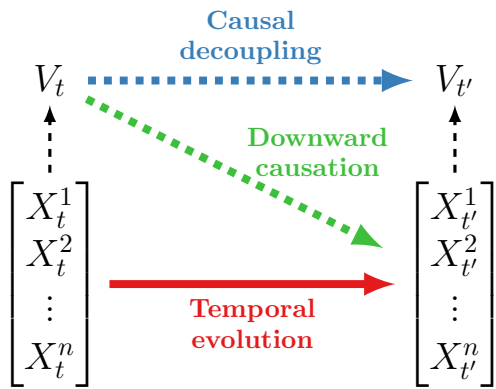

   \centering
   \includetikz{tikz/}{EmergenceDiagram}
   \caption{\textbf{Diagram of causally emergent relationships.} Causally emergent
   features have predictive power beyond individual components. Downward causation 
   takes place when that predictive power refers to individual elements; causal decoupling
   when it refers to itself or other high-order features.}
   \label{fig:lattice}
\end{figure}

%%%%%%%%%%%%%%%%%%%%%%%%%%%%%%%%%%%%%%%%%%%%%%%%%%%%%%%%
\subsection{Partial information decomposition}
\label{sec:pid}

Our theory is based on the \emph{Partial Information Decomposition} (PID)
framework~\cite{williams2010nonnegative}, which provides powerful tools to
reason about information in multivariate systems. In a nutshell, PID decomposes
the information that $n$ sources $\bm X = (X^1,\dots,X^n)$ provide about a
target variable $Y$ in terms of information atoms as follows:
\begin{equation}
I(\bm X ; Y) = \sum_{\bm\alpha\in\mathcal{A}} I_\partial^{\bm\alpha}(\bm X ; Y)~,
\end{equation}
with $\mathcal{A} = \{ \{\alpha_1,\dots,\alpha_L\} :
\alpha_i\subseteq[n], \alpha_i \not\subset \alpha_j \forall i,j \}$ being the
set of antichain collections~\cite{williams2010nonnegative}. Intuitively,
$I_\partial^{\bm\alpha}$ for $\bm\alpha=\{\alpha_1,\dots,\alpha_L\}$ 
represents the information that the collection of
variables $\bm X^{\alpha_1},\dots,\bm X^{\alpha_L}$ provide redundantly, but
their sub-collections don't. For example, for $n=2$ source variables, $\bm\alpha
= \{\{1\}\{2\}\}$ corresponds to the information about $Y$ that is provided
independently (and, hence, redundantly) by both of them, $\bm\alpha =
\{\{i\}\}$ to the information provided uniquely by $X_i$, and, most
interestingly, $\bm\alpha = \{\{12\}\}$ corresponds to the information
provided by both sources jointly but not separately -- commonly referred to as
\emph{informational synergy}.

One of the drawbacks of PID is that the number of atoms (i.e. the cardinality
of $\mathcal{A}$) grows super-exponentially with the number of sources, and
hence it is useful to coarse-grain the decomposition according to specific
criteria. Here we introduce the notion of \emph{$k$\textsuperscript{th}-order
synergy} between $n$ variables, which is calculated as
\begin{equation}
  \textnormal{\texttt{Syn}}^{(k)}(\bm X ; Y)\coloneqq \sum_{\bm\alpha\in\mathcal{S}^{(k)}} I_\partial^{\bm\alpha}(\bm X ; Y)~,\nonumber
\end{equation}
with $\mathcal{S}^{(k)} = \big\{\{\alpha_1,\dots,\alpha_L\} \in\mathcal{A} :
\min_j |\alpha_j| > k \big\}$. Intuitively, $\texttt{Syn}^{(k)}(\bm X; Y)$
corresponds to the information about the target that is provided by the whole
$\bm X$ but is not contained in any set of $k$ or less parts when considered
separately from the rest. Accordingly, $\mathcal{S}^{(k)}$ only contains
collections with groups of more than $k$ sources.

Similarly, we introduce the unique information of $\bm X^\beta$ with
$\beta\subset [n]$ with respect to sets of at most $k$ other variables, which
is calculated as
\begin{equation}
  \textnormal{\texttt{Un}}^{(k)}(\bm X^{\beta} ; Y | \bm X^{-\beta} ) \coloneqq \sum_{\bm \alpha \in \mathcal{U}^{(k)}(\beta)} I_\partial^{ \bm\alpha}(\bm X ; Y)~. \nonumber
\end{equation}
Above, $\mathcal{U}^{(k)}(\beta) = \{ \bm \alpha\in \mathcal{A} : \beta \in
\bm\alpha, \forall \alpha\neq\beta \in\bm\alpha, \alpha\subseteq
[n]\setminus\beta, |\alpha| > k\}$, and $\bm X^{-\beta}$ being all the
variables in $\bm X$ whose indices are not in $\beta$. Put simply,
$\texttt{Un}^{(k)}(\bX^\beta; Y | \bm X^{-\beta})$ represents the information
carried by $\bm X^\beta$ about $Y$ that no group of $k$ or less variables
within $\bm X^{-\bm\beta}$ has on its own. Note that these coarse-grained terms
can be used to build a general decomposition of $I(\bm X, Y)$ described in
Appendix~\ref{app:pid}, the properties of which are proven in
Appendix~\ref{app:highorderPID}.

One peculiarity of PID is that it postulates the structure of information atoms
and the relations between them, but it does not prescribe a particular
functional form to compute $I_\partial^{\bm\alpha}$~\footnote{Only one of the
information atoms must be specified to determine the whole PID -- usually the
redundancy between all individual elements~\cite{williams2010nonnegative}.}.
Please note that there have been multiple proposals for specific functional
forms of $I_\partial^{\bm\alpha}$ in the PID literature; see e.g.
Refs.~\cite{ay2019information,lizier2018information,james2018unique,ince2017measuring}.
A particular method for fully computing the information atoms based on a recent
PID~\cite{rosas2020operational} is discussed in
Sec.~\ref{sec:quantifying_synergies}.

Conveniently, our theory doesn't rely on a specific functional form of PID, but
only on a few basic properties that are precisely formulated in
Appendix~\ref{app:highorderPID}. Therefore, the theory can be instantiated
using any PID -- as long as those properties are satisfied. Importantly, as
shown in Section~\ref{sec:general_tools}, the theory allows the derivation of
practical metrics that are valid independently of the PID chosen.

%%%%%%%%%%%%%%%%%%%%%%%%%%%%%%%%%%%%%%%%%%%%%%%%%%%%%%%%
\subsection{Defining causal emergence}\label{sec:defining}

With the tools of PID at hand, now we introduce our formal definition of causal
emergence.

\begin{definition}\label{def:emergence_unique}
For a system described by $\bm X_t$, a supervenient feature $V_t$
is said to exhibit causal emergence of order $k$ if
\begin{equation}\label{eq:irreducible_causal}
\textnormal{\texttt{Un}}^{(k)}(V_t;\bm X_{t'} | \bm X_t) > 0~.
\end{equation}
\end{definition}

Accordingly, causal emergence takes place when a supervenient feature $V_t$ has
irreducible causal power, i.e. when it \textit{exerts causal influence that is
not mediated by any of the parts of the system}. In other words, $V_t$
represents some emergent collective property of the system if: 1) contains
information that is dynamically relevant (in the sense that it predicts the
future evolution of the system); and 2) this information is beyond what is
given by the groups of $k$ parts in the system when considered separately.

To better understand the implications of this definition, let us study some
of its basic properties.
\begin{lemma}\label{lemma:basic_props}
Consider a feature $V_t$ that exhibits causal emergence of order 1 over $\bm X_t$. Then,
\begin{itemize}
\item[(i)] The dimensionality of the system satisfies $n\geq 2$.
\item[(ii)] There exists no deterministic function $g(\cdot)$ such that $V_{t}=g(X_t^j)$ for any $j=1,\dots,n$.
\end{itemize} 
\end{lemma}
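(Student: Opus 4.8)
The plan is to prove both items by contradiction, reducing each to a single structural fact about the partial information decomposition that underlies Definition~\ref{def:emergence_unique}. Throughout I would work in the PID of the $n+1$ sources $(V_t,X_t^1,\dots,X_t^n)$ with target $\bm X_{t'}$, writing $0$ for the index attached to $V_t$, so that the hypothesis of causal emergence of order~$1$ reads $\sum_{\bm\alpha\in\mathcal{U}^{(1)}(\{0\})}I_\partial^{\bm\alpha}(\,\cdot\,;\bm X_{t'})>0$, where $\mathcal{U}^{(1)}(\{0\})$ collects the antichains that contain $\{0\}$ together with (possibly) groups of size $\geq 2$ drawn from $\{1,\dots,n\}$. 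The only PID properties I would invoke are those listed in Appendix~\ref{app:highorderPID}: non-negativity of the atoms, $I_\partial^{\bm\alpha}\geq 0$; the self-redundancy identity $I_\cap^{\{\gamma\}}=I(\bm X^{\gamma};\bm X_{t'})$ for a single group $\gamma$; and the M\"obius summation $I_\cap^{\bm\alpha}=\sum_{\bm\beta\preceq\bm\alpha}I_\partial^{\bm\beta}$.

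The first step is a lattice identity: for every index $i\in\{1,\dots,n\}$,
\begin{equation}
I(V_t;\bm X_{t'}\mid X_t^i)=I_\cap^{\{\{0,i\}\}}-I_\cap^{\{\{i\}\}}=\sum_{\bm\beta\preceq\{\{0,i\}\},\ \bm\beta\not\preceq\{\{i\}\}}I_\partial^{\bm\beta}~,\nonumber
\end{equation}
obtained from self-redundancy, the chain rule for mutual information, and the M\"obius relation (using $\{\{i\}\}\preceq\{\{0,i\}\}$). I would then verify the key combinatorial claim that every $\bm\alpha\in\mathcal{U}^{(1)}(\{0\})$ lies in the index set of this sum: because $\{0\}\in\bm\alpha$ and $\{0\}\subseteq\{0,i\}$ we have $\bm\alpha\preceq\{\{0,i\}\}$, whereas no member of $\bm\alpha$ -- the set $\{0\}$, or a subset of $\{1,\dots,n\}$ of cardinality $\geq 2$ -- can be contained in the singleton $\{i\}$, so $\bm\alpha\not\preceq\{\{i\}\}$. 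Consequently, whenever $I(V_t;\bm X_{t'}\mid X_t^i)=0$ the sum of non-negative atoms on the right vanishes term by term, so $I_\partial^{\bm\alpha}=0$ for all $\bm\alpha\in\mathcal{U}^{(1)}(\{0\})$, i.e.\ $\textnormal{\texttt{Un}}^{(1)}(V_t;\bm X_{t'}\mid\bm X_t)=0$, contradicting the causal-emergence hypothesis.

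It then remains to produce a vanishing conditional mutual information of this shape in each case. For (ii): if $V_t=g(X_t^j)$ for some deterministic $g$ and some $j$ -- relabel so that $j=1$ -- then $V_t$ is constant given $X_t^1$, hence $I(V_t;\bm X_{t'}\mid X_t^1)=0$ and the contradiction is immediate. For (i): if $n=1$ then $\bm X_t=X_t^1$, and the supervenience assumption on $V_t$ is literally the Markov chain $V_t - X_t^1 - \bm X_{t'}$, i.e.\ $I(V_t;\bm X_{t'}\mid X_t^1)=0$; the same step forces $\textnormal{\texttt{Un}}^{(1)}=0$, so $n=1$ is impossible and $n\geq 2$. (When $n=1$ the set $\mathcal{U}^{(1)}(\{0\})$ reduces to the single antichain $\{\{0\}\}$, whose atom is exactly the unique information of $V_t$ relative to $X_t^1$, which makes the statement intuitively transparent: if the system has one part, ``the whole'' is that part.)

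The only obstacle I anticipate is bookkeeping: confirming that self-redundancy, the M\"obius/summation relation, and non-negativity of the atoms are precisely among the axioms that Appendix~\ref{app:highorderPID} imposes on an arbitrary admissible PID -- so that the lemma is valid independently of the chosen functional form of $I_\partial$ -- and handling the lattice order $\preceq$ carefully when locating $\mathcal{U}^{(1)}(\{0\})$ inside the summation range. Everything else is routine.
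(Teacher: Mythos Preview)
Your argument is sound, but it takes a different and more combinatorial route than the paper. For both items you derive and use the single-part bound $\texttt{Un}^{(1)}(V_t;\bm X_{t'}\mid\bm X_t)\leq I(V_t;\bm X_{t'}\mid X_t^i)$, obtained by writing the conditional mutual information as a M\"obius sum of PID atoms and locating $\mathcal{U}^{(1)}(\{0\})$ inside that sum; you then kill the right-hand side via $V_t=g(X_t^j)$ for (ii) and via supervenience when $n=1$ for (i). The paper, by contrast, treats the two items separately and more abstractly: for (i) it simply observes that $\mathcal{S}^{(1)}=\emptyset$ when $n=1$, so $\texttt{Syn}^{(1)}(\bm X_t;\bm X_{t'})=0$, which bounds $\texttt{Un}^{(1)}$ from above (this is essentially a forward reference to Corollary~\ref{cor:syn_cap}); for (ii) it invokes the coarse-grained upper bound $\texttt{Un}^{(1)}(V_t;\bm X_{t'}\mid\bm X_t)\leq I(V_t;\bm X_{t'}\mid\bm X_t)$ listed among the non-negativity axioms in Appendix~\ref{app:highorderPID}, and then bounds that by $H(V_t\mid X_t^j)=0$. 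Your approach is unified, avoids the forward reference, and yields the sharper per-part inequality; the paper's is shorter because it works entirely at the level of the coarse-grained axioms it has already stated. The one caveat you correctly anticipate is real: your ``vanishes term by term'' step needs non-negativity of \emph{individual} atoms $I_\partial^{\bm\alpha}\geq 0$, which is stronger than the non-negativity of $\texttt{Syn}^{(k)}$ and $\texttt{Un}^{(k)}$ actually listed in Appendix~\ref{app:highorderPID}; self-redundancy and the M\"obius relation, on the other hand, are part of the Williams--Beer PID scaffolding and need no separate axiom.
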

\begin{proof}
See Appendix~\ref{app:proofs}.
\end{proof}

These two properties establish causal emergence as a \emph{fundamentally
collective} phenomenon. In effect, property (i) states that causal emergence is
a property of multivariate systems, and property (ii) that $V_t$ cannot have
emergent behaviour if it can be perfectly predicted from a single variable.

In order to use Definition~\ref{def:emergence_unique}, one needs a candidate
feature $V_t$ to be tested. However, in some cases there are no obvious
candidates for an emergent feature, for which
Definition~\ref{def:emergence_unique} might seem problematic. Our next result
provides a criterion for the existence of emergent features based solely on the
system's dynamics.
\begin{theorem}\label{prop:emergence_syn}
A system $\bX$ has a causally emergent feature of order $k$ if and only if
\begin{equation}
  \textnormal{\texttt{Syn}}^{(k)}(\bm X_t;\bm X_{t'})>0 ~ .
\end{equation}
\end{theorem}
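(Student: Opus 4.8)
The plan is to establish the two implications of the equivalence separately, in both cases exploiting a single observation: the \emph{identity feature} $V_t=\bm X_t$ is always a legitimate supervenient feature, since it is a deterministic function of $\bm X_t$ and hence $V_t-\bm X_t-\bm X_{t'}$ is trivially a Markov chain. The central lemma I would prove is the identity
\begin{equation}
\texttt{Un}^{(k)}(\bm X_t;\bm X_{t'}\,|\,\bm X_t) \;=\; \texttt{Syn}^{(k)}(\bm X_t;\bm X_{t'})~,\nonumber
\end{equation}
which reduces the ``if'' direction to merely checking that the right-hand side is positive, and combines with a monotonicity property to give the ``only if'' direction.

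For the ``if'' direction, assume $\texttt{Syn}^{(k)}(\bm X_t;\bm X_{t'})>0$ and choose $V_t=\bm X_t$. By the identity above, $\texttt{Un}^{(k)}(V_t;\bm X_{t'}|\bm X_t)=\texttt{Syn}^{(k)}(\bm X_t;\bm X_{t'})>0$, so $V_t$ witnesses causal emergence of order $k$ in the sense of Definition~\ref{def:emergence_unique}. To prove the identity I would pass to the PID with $n{+}1$ sources consisting of the $n$ parts together with an extra source $\Omega$ informationally equivalent to the join of all parts. One then shows that, because $\Omega$ carries exactly the information of that join, the atoms indexed by $\mathcal{U}^{(k)}(\{\Omega\})$ in the $(n{+}1)$-source lattice are in bijection with, and numerically equal to, the atoms indexed by $\mathcal{S}^{(k)}$ in the original $n$-source lattice: the atom $\{\{\Omega\}\}$ vanishes because supervenience gives $I(V_t;\bm X_{t'}|\bm X_t)=0$ (no information beyond the parts taken jointly), while for $L\ge 1$ each atom $\{\{\Omega\},\alpha_1,\dots,\alpha_L\}$ collapses onto $\{\alpha_1,\dots,\alpha_L\}$ because redundancy with $\Omega$ is vacuous. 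This argument uses only the abstract lattice/M\"obius properties of PID collected in Appendix~\ref{app:highorderPID}, so it is valid for any admissible choice of $I_\partial^{\bm\alpha}$.

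For the ``only if'' direction, let $V_t$ be any supervenient feature with $\texttt{Un}^{(k)}(V_t;\bm X_{t'}|\bm X_t)>0$. Since the conditional law of $V_t$ is determined by $\bm X_t$ and $V_t-\bm X_t-\bm X_{t'}$ is Markov, a data-processing (left-monotonicity) property of the coarse-grained unique information --- holding the reference parts $\bm X_t$ fixed --- yields $\texttt{Un}^{(k)}(V_t;\bm X_{t'}|\bm X_t)\le\texttt{Un}^{(k)}(\bm X_t;\bm X_{t'}|\bm X_t)$. Invoking the identity then gives $\texttt{Syn}^{(k)}(\bm X_t;\bm X_{t'})\ge\texttt{Un}^{(k)}(V_t;\bm X_{t'}|\bm X_t)>0$, which closes the equivalence. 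This monotonicity step should follow from the basic PID properties in Appendix~\ref{app:highorderPID} together with the data-processing inequality along the Markov chain, and is comparatively routine.

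The step I expect to be the main obstacle is the identity $\texttt{Un}^{(k)}(\bm X_t;\bm X_{t'}|\bm X_t)=\texttt{Syn}^{(k)}(\bm X_t;\bm X_{t'})$, because the two coarse-grained quantities live on different redundancy lattices: one must verify carefully that adjoining a source equivalent to the top join neither creates nor destroys information and merely relabels the synergy atoms, arguing from the PID axioms rather than from any explicit functional form. Once this is secured, the remainder of the proof is short.
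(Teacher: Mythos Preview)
Your proposal is correct and follows essentially the same route as the paper: the key identity $\texttt{Un}^{(k)}(\bm X_t;\bm X_{t'}\mid\bm X_t)=\texttt{Syn}^{(k)}(\bm X_t;\bm X_{t'})$ is precisely Lemma~\ref{lemma:unsyn}, and both directions are handled exactly as you describe (identity feature $V_t=\bm X_t$ for sufficiency, source data-processing inequality on $\texttt{Un}^{(k)}$ for necessity). The only minor refinement is that the paper establishes the vanishing of the top atom $I_\partial^{\{n+1\}}$ via the \emph{deterministic equality} axiom (showing $I_\cap^{\{1\cdots n\}\{n+1\}}=I_\cap^{\{1\cdots n\}}$) rather than by appealing to $I(V_t;\bm X_{t'}\mid\bm X_t)=0$, which is not by itself the same quantity as that single atom; you will want to invoke deterministic equality explicitly when you write out the lattice argument.
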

\begin{proof}
See Appendix~\ref{app:highorderPID}.
\end{proof}
\begin{corollary}\label{cor:syn_cap}
The following bound holds for any supervenient feature $V_t$:
$\textnormal{\texttt{Un}}^{(k)}(V_t;\bm X_{t'} | \bm X_t) \leq
\textnormal{\texttt{Syn}}^{(k)}(\bm X_t ; \bm X_{t'})$.
\end{corollary}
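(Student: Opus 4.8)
The plan is to obtain this inequality as the quantitative form of the ``only if'' direction of Theorem~\ref{prop:emergence_syn}, by embedding the feature as an extra source and comparing PID atoms of the enlarged system with atoms of the original one. Concretely, I would regard $V_t$ as an $(n+1)$-th source and work in the PID redundancy lattice over the index set $[n]\cup\{V\}$ with target $\bm X_{t'}$, so that $\textnormal{\texttt{Un}}^{(k)}(V_t;\bm X_{t'}|\bm X_t)$ becomes a sum of non-negative atoms $I_\partial^{\bm\alpha}$ over the antichains $\bm\alpha\in\mathcal{U}^{(k)}(\{V\})$, and then match these, atom by atom, with the antichains $\bm\gamma\in\mathcal{S}^{(k)}$ of the $n$-source lattice that define $\textnormal{\texttt{Syn}}^{(k)}(\bm X_t;\bm X_{t'})$.

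The first step exploits supervenience. Since $V_t-\bm X_t-\bm X_{t'}$ is a Markov chain, $I(V_t;\bm X_{t'}|\bm X_t)=0$; collapsing the $n$ parts into a single source and using that unique information is bounded by conditional mutual information, every bit $V_t$ carries about $\bm X_{t'}$ is redundant with $\bm X_t$. In the enlarged lattice this reads $I_\cap^{\{\{V\},[n]\}}=I_\cap^{\{\{V\}\}}=I(V_t;\bm X_{t'})$, so by non-negativity of the atoms all $I_\partial^{\bm\alpha}$ with $\{V\}\in\bm\alpha$ that are not $\preceq\{\{V\},[n]\}$ vanish — in particular the ``pure unique'' atom $I_\partial^{\{\{V\}\}}$ is zero. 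The second step is purely combinatorial: the antichains of $\mathcal{U}^{(k)}(\{V\})$ other than $\{\{V\}\}$ are exactly those of the form $\{\{V\}\}\cup\bm\gamma$ with $\bm\gamma\in\mathcal{S}^{(k)}$, and $\bm\gamma\mapsto\{\{V\}\}\cup\bm\gamma$ is a bijection onto $\mathcal{U}^{(k)}(\{V\})\setminus\{\{\{V\}\}\}$ (one checks that adjoining the singleton $\{V\}$ preserves the antichain property and the size constraint, since $V\notin[n]$). Combining the two steps, $\textnormal{\texttt{Un}}^{(k)}(V_t;\bm X_{t'}|\bm X_t)=\sum_{\bm\gamma\in\mathcal{S}^{(k)}}I_\partial^{\{\{V\}\}\cup\bm\gamma}$, a sum indexed by precisely the antichains appearing in $\textnormal{\texttt{Syn}}^{(k)}(\bm X_t;\bm X_{t'})$.

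The main obstacle is the final comparison $\sum_{\bm\gamma\in\mathcal{S}^{(k)}}I_\partial^{\{\{V\}\}\cup\bm\gamma}\leq\sum_{\bm\gamma\in\mathcal{S}^{(k)}}I_\partial^{\bm\gamma}$, which relates atoms of the $(n+1)$-source and $n$-source lattices — a relation a generic PID need not support. I would close this gap using the consistency and monotonicity properties of the coarse-grained decomposition established in Appendix~\ref{app:highorderPID}: intuitively, appending a variable that is conditionally independent of the target given the remaining sources cannot increase the order-$k$ synergy, while the genuinely new unique information such a variable can contribute about the target is at most that synergy. Since the framework is meant to be PID-agnostic, at this point one appeals exactly to the abstract axioms listed in Appendix~\ref{app:highorderPID} rather than to any particular functional form of $I_\partial$; equivalently, one simply reads the bound off from the proof of Theorem~\ref{prop:emergence_syn}, of which it is the quantitative refinement.
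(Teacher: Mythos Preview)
Your proposal takes a much more elaborate route than the paper, and the elaborate part does not actually close the argument. The paper's proof (embedded in the proof of Theorem~\ref{prop:emergence_syn}) is two lines: since $V_t-\bm X_t-\bm X_{t'}$ is a Markov chain, the \emph{source data processing inequality} axiom for unique information (Appendix~\ref{app:highorderPID}) gives $\texttt{Un}^{(k)}(V_t;\bm X_{t'}|\bm X_t)\leq\texttt{Un}^{(k)}(\bm X_t;\bm X_{t'}|\bm X_t)$, and then Lemma~\ref{lemma:unsyn} identifies the right-hand side with $\texttt{Syn}^{(k)}(\bm X_t;\bm X_{t'})$. No atom-by-atom bookkeeping is needed.

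Your first two steps essentially reconstruct the combinatorics of Lemma~\ref{lemma:unsyn} for a general $V_t$ rather than for $V_t=\bm X_t$, which is fine, but the third step --- comparing $I_\partial^{\{\{V\}\}\cup\bm\gamma}$ in the $(n{+}1)$-source lattice with $I_\partial^{\bm\gamma}$ in the $n$-source lattice --- is not supported by the axioms as stated. The \emph{deterministic equality} property only lets you collapse a source that is a function of a \emph{single} other source, not of all of them jointly, so you cannot in general identify or bound atoms across these two lattices termwise. You correctly flag this as the obstacle and then propose to ``appeal to the abstract axioms'' or ``read the bound off from the proof of Theorem~\ref{prop:emergence_syn}'' --- but the specific axiom that does the job is precisely the source DPI for $\texttt{Un}^{(k)}$, and once you invoke it the entire lattice-embedding machinery becomes redundant. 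In short: your route is not wrong in spirit, but its hard step is exactly the paper's easy step, and the rest is a detour.
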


This result shows that the capability of exhibiting emergence is closely
related to how synergistic the system components are with respect to their
future evolution. Importantly, this result enables us to determine whether or
not the system admits any emergent features by just inspecting the synergy
between its parts -- \emph{without knowing what those features might be}.
Conversely, this result also allows us to discard the existence of causal
emergence by checking a single condition: the lack of dynamical synergy.
Furthermore, Corollary~\ref{cor:syn_cap} implies that the quantity
$\textnormal{\texttt{Syn}}^{(k)}(\bm X_t;\bm X_{t'})$ serves as a measure of
the \emph{emergence capacity} of the system, as it upper-bounds the unique
information of all possible supervenient features.

Theorem~\ref{prop:emergence_syn} establishes a direct link between causal
emergence and the system's statistics, avoiding the need for the observer to
propose a particular feature of interest. It is important to remark that the
emergence capacity of a system depends on the system's partition into
microscopic elements~\footnote{In effect, it is plausible that a system might
have emergence capacity under one microscopic representation, but not with
respect to another after a change of variables.}. Therefore, emergence in the
context of our theory always refers to ``emergence with respect to a given
microscopic partition.''

%%%%%%%%%%%%%%%%%%%%%%%%%%%%%%%%%%%%%%%%%%%%%%%%%%%%%%%%
\subsection{A taxonomy of emergence}
\label{sec:taxonomy}

Our theory, so far, is able to detect \emph{whether} there is emergence taking
place; the next step is to be able to characterise \textit{which kind} of
emergence it is. For this purpose, we combine our feature-agnostic criterion of
emergence presented in Theorem~\ref{prop:emergence_syn} with Integrated
Information Decomposition, $\Phi$ID, a recent extension of PID to multi-target
settings~\cite{mediano2019beyond}.

Using $\Phi$ID, one can decompose a PID atom as
\begin{equation}\label{eq:PhiID_dec}
I_\partial^{\bm\alpha}(\bX ; \bm X_{t'}) = \sum_{\bm\beta\in\mathcal{A}} I_\partial^{\bm\alpha\rightarrow \bm\beta}(\bX ; \bm X_{t'}) ~.
\end{equation}
For example, if $n=2$ then $I_\partial^{\{1\}\{2\}\rightarrow\{1\}\{2\}}$
represents the information shared by both time series at both timesteps (for
example, when $X^1_t,X^2_t,X^1_{t'},X^2_{t'}$ are all copies of each other);
and $I_\partial^{\{12\}\rightarrow\{1\}}$ corresponds to the synergistic causes
in \bX that have a unique effect on $X^1_{t'}$ (for example, when $X^1_{t'} =
X^1_t \oplus X^2_t$). More details and intuitions on $\Phi$ID can be found in
Ref.~\cite{mediano2019beyond}.

With the fine-grained decomposition provided by $\Phi$ID one can discriminate
between different kinds of synergies. In particular, we introduce the
\emph{downward causation} and \emph{causal decoupling indices of order $k$},
denoted by $\mathcal{D}^{(k)}$ and $\mathcal{G}^{(k)}$ respectively, as
\begin{align}
\mathcal{G}^{(k)}(\bX;\bm X_{t'}) \coloneqq & \sum_{\substack{\bm\alpha,\bm\beta\in\mathcal{S}^{(k)}}} I_\partial^{\bm\alpha\rightarrow\bm\beta} (\bX;\bm X_{t'}) ~, \label{eq:decoupling_index}\\
\mathcal{D}^{(k)} (\bX;\bm X_{t'}) \coloneqq & \sum_{\substack{\bm\alpha\in\mathcal{S}^{(k)}\\\bm\beta\in\mathcal{A}\setminus\mathcal{S}^{(k)}}} I_\partial^{\bm\alpha\rightarrow\bm\beta} (\bX;\bm X_{t'})~. \label{eq:downward_index}
\end{align}
From these definitions and Eq.~\eqref{eq:PhiID_dec}, one can verify that
\begin{equation}\label{eq:syn_decomp_sec}
\texttt{Syn}^{(k)} (\bX;\bm X_{t'}) = \mathcal{G}^{(k)} (\bX;\bm X_{t'}) + \mathcal{D}^{(k)} (\bX;\bm X_{t'})~. 
\end{equation}

\begin{figure}[t]
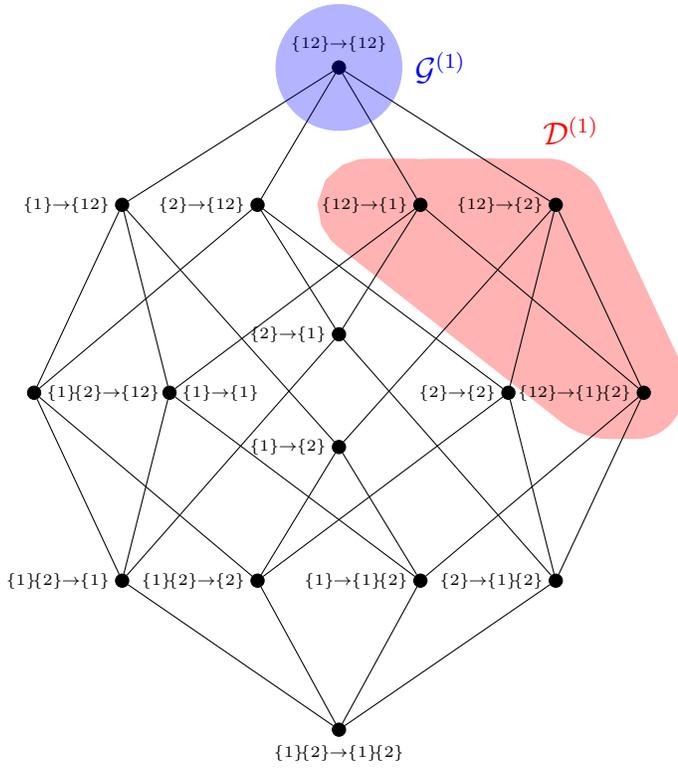

  \centering
  \includetikz{tikz/}{ColouredLattice}

  \caption{\textbf{Integrated Information Decomposition} ($\Phi$ID). $\Phi$ID
  lattice for $n=2$ time series~\cite{mediano2019beyond}, with downward ($\mathcal{D}$)
  causation and causal decoupling ($\mathcal{G}$) terms highlighted.}
  \label{fig:causallattice}
\end{figure}

Therefore, the emergence capacity of a system naturally decomposes in two
different components: information about $k$-plets of future variables, and
information about future collective properties beyond $k$-plets. The $\Phi$ID
atoms that belong to these two terms are illustrated within the $\Phi$ID
lattice for two time series in Figure~\ref{fig:causallattice}. The rest of this
section shows that $\mathcal{D}^{(k)}$ and $\mathcal{G}^{(k)}$ are natural
metrics of downward causation and causal decoupling, respectively.

%%%%%%%%%%%%%%%%%%%%%%%%%%%%%%%%%%%%%%%%%%%%%%%%%%%%%%%%
\subsubsection{Downward causation}

Intuitively, downward causation occurs when collective properties have
irreducible causal power over individual parts. More formally:

\begin{definition}
A supervenient feature $V_t$ exhibits downward causation of order $k$ if, for
some $\alpha$ with $|\alpha| = k$:
\begin{equation}
  \textnormal{\texttt{Un}}^{(k)}(V_t; \bm X_{t'}^{\alpha} | \bX ) > 0~.
\end{equation}
 \end{definition}

Note that, in contrast with Definition~\ref{def:emergence_unique}, downward
causation requires the feature $V_t$ to have unique predictive power over the
evolution of specific subsets of the whole system. In particular, an emergent
feature $V_t$ that has predictive power over e.g. $X_{t'}^j$ is said to exert
downward causation, as it predicts something about $X_{t'}^j$ that could not be
predicted from any particular $X_t^i$ for $i\in[n]$. Put differently, in a
system with downward causation the whole has an effect on the parts that cannot
be reduced to low-level interactions. A minimal case of this is provided by
Example~2 in Section~\ref{sec:examples}.

Our next result formally relates downward causation with the index
$\mathcal{D}^{(k)}$ introduced in Eq.~\eqref{eq:downward_index}.

\begin{theorem}\label{prop:downward}
A system $\bX$ admits features that exert downward causation of order $k$ iff
$\mathcal{D}^{(k)}(\bm X_t; \bm X_{t'}) >0$.
\end{theorem}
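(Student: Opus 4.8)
The plan is to deduce the statement from Theorem~\ref{prop:emergence_syn}, applied to the coarser targets $\bm X_{t'}^{\alpha}$ with $|\alpha|=k$, together with a combinatorial identity at the level of the $\Phi$ID lattice. First I would establish that a supervenient feature exerting downward causation of order $k$ exists if and only if $\textnormal{\texttt{Syn}}^{(k)}(\bm X_t;\bm X_{t'}^{\alpha})>0$ for some $\alpha$ with $|\alpha|=k$. Indeed, a supervenient $V_t$ obeys $V_t-\bm X_t-\bm X_{t'}$ and hence $V_t-\bm X_t-\bm X_{t'}^{\alpha}$, so the reasoning behind Corollary~\ref{cor:syn_cap} and Theorem~\ref{prop:emergence_syn} in Appendix~\ref{app:highorderPID} applies with the target $\bm X_{t'}$ replaced by $\bm X_{t'}^{\alpha}$: the bound $\textnormal{\texttt{Un}}^{(k)}(V_t;\bm X_{t'}^{\alpha}\mid\bm X_t)\le\textnormal{\texttt{Syn}}^{(k)}(\bm X_t;\bm X_{t'}^{\alpha})$ holds for every supervenient $V_t$, and it is saturated by the (trivially supervenient) choice $V_t=\bm X_t$, for which $\textnormal{\texttt{Un}}^{(k)}(\bm X_t;Y\mid\bm X_t)=\textnormal{\texttt{Syn}}^{(k)}(\bm X_t;Y)$ since the unique information of the full state relative to its own $\le k$-subsets is exactly the $k$\textsuperscript{th}-order synergy. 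Thus the real content of the proof is the equivalence $\max_{|\alpha|=k}\textnormal{\texttt{Syn}}^{(k)}(\bm X_t;\bm X_{t'}^{\alpha})>0 \iff \mathcal{D}^{(k)}(\bm X_t;\bm X_{t'})>0$.

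For that step I would invoke the target-side marginalisation rule of $\Phi$ID (the target analogue of Eq.~\eqref{eq:PhiID_dec}, proven in Appendix~\ref{app:highorderPID}): the information that $\bm X_t$ carries about $\bm X_{t'}^{\alpha}$ is the sum of precisely those atoms $I_\partial^{\bm\gamma\rightarrow\bm\delta}(\bm X_t;\bm X_{t'})$ whose target index $\bm\delta$ has at least one group $\delta_j\subseteq\alpha$. Restricting the source index to $\mathcal{S}^{(k)}$ gives
\begin{equation}\nonumber
\textnormal{\texttt{Syn}}^{(k)}(\bm X_t;\bm X_{t'}^{\alpha})
=\sum_{\substack{\bm\gamma\in\mathcal{S}^{(k)}\\ \bm\delta:\ \exists\,\delta_j\in\bm\delta,\ \delta_j\subseteq\alpha}} I_\partial^{\bm\gamma\rightarrow\bm\delta}(\bm X_t;\bm X_{t'})\ .
\end{equation}
When $|\alpha|=k$, any group $\delta_j\subseteq\alpha$ has $|\delta_j|\le k$, so $\bm\delta\notin\mathcal{S}^{(k)}$ and every atom above is one of those summed in $\mathcal{D}^{(k)}$; by non-negativity of the atoms, $\textnormal{\texttt{Syn}}^{(k)}(\bm X_t;\bm X_{t'}^{\alpha})\le\mathcal{D}^{(k)}(\bm X_t;\bm X_{t'})$. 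Conversely, any atom summed in $\mathcal{D}^{(k)}$ has $\bm\gamma\in\mathcal{S}^{(k)}$ and $\bm\delta\notin\mathcal{S}^{(k)}$, hence contains a group $\delta_0$ with $|\delta_0|\le k$; choosing $\alpha\supseteq\delta_0$ with $|\alpha|=k$ (possible because $\mathcal{S}^{(k)}\neq\emptyset$ forces $n>k$) places that atom inside $\textnormal{\texttt{Syn}}^{(k)}(\bm X_t;\bm X_{t'}^{\alpha})$. Therefore some $\textnormal{\texttt{Syn}}^{(k)}(\bm X_t;\bm X_{t'}^{\alpha})$ is positive iff at least one atom summed in $\mathcal{D}^{(k)}$ is positive, which by non-negativity is the same as $\mathcal{D}^{(k)}(\bm X_t;\bm X_{t'})>0$. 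Chaining the two steps yields the theorem.

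The main obstacle is not any single equivalence but the structural inputs they rely on, all of which live in the PID/$\Phi$ID bookkeeping rather than in elementary information inequalities: (i) that Theorem~\ref{prop:emergence_syn} and Corollary~\ref{cor:syn_cap} are genuinely target-agnostic, in particular the Möbius identity $\textnormal{\texttt{Un}}^{(k)}(\bm X_t;Y\mid\bm X_t)=\textnormal{\texttt{Syn}}^{(k)}(\bm X_t;Y)$ that makes the bound of Corollary~\ref{cor:syn_cap} tight; and (ii) the precise form of the $\Phi$ID target-marginalisation rule used above, namely that passing from the target $\bm X_{t'}$ to $\bm X_{t'}^{\alpha}$ retains exactly the atoms with a target group inside $\alpha$. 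Both are among the ``basic properties'' isolated in Appendix~\ref{app:highorderPID}; granting them, what remains is the antichain combinatorics above, which is routine.
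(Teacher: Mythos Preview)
Your argument follows the same skeleton as the paper's: both directions reduce the existence of a downward-causing feature to the condition $\textnormal{\texttt{Syn}}^{(k)}(\bm X_t;\bm X_{t'}^{\alpha})>0$ for some $|\alpha|=k$, using the source data processing inequality for $\textnormal{\texttt{Un}}^{(k)}$ together with the identity $\textnormal{\texttt{Un}}^{(k)}(\bm X_t;Y\mid\bm X_t)=\textnormal{\texttt{Syn}}^{(k)}(\bm X_t;Y)$ of Lemma~\ref{lemma:unsyn}, and then relate this to $\mathcal{D}^{(k)}$.

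Where you diverge is in how the bridge to $\mathcal{D}^{(k)}$ is justified. The paper does \emph{not} derive the bounds from a $\Phi$ID target-marginalisation rule plus atom-wise non-negativity; instead, Appendix~\ref{app:highorderPID} simply \emph{postulates} the sandwich inequality
\[
\textnormal{\texttt{Syn}}^{(k)}(\bm X_t;\bm X_{t'}^{\beta})\;\le\;\mathcal{D}^{(k)}(\bm X_t;\bm X_{t'})\;\le\;\sum_{|\alpha|=k}\textnormal{\texttt{Syn}}^{(k)}(\bm X_t;\bm X_{t'}^{\alpha})
\]
as the required $\Phi$ID property and invokes it directly (the upper bound gives the converse via non-negativity of $\textnormal{\texttt{Syn}}^{(k)}$; the lower bound gives the forward direction). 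So your ingredient (i) is indeed in the appendix, but your ingredient (ii) is not: no target-marginalisation identity is stated or proven there, and non-negativity is assumed only for the coarse-grained quantities $\textnormal{\texttt{Syn}}^{(k)}$ and $\textnormal{\texttt{Un}}^{(k)}$, not for individual $\Phi$ID atoms $I_\partial^{\bm\gamma\to\bm\delta}$. Your lattice combinatorics furnishes a plausible \emph{reason} a $\Phi$ID should satisfy the sandwich bound, which is a nice addition, but it imports stronger hypotheses than the paper's axiom list; within the paper's framework the cleaner move is to cite the sandwich inequality as the assumed $\Phi$ID property and be done.
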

\begin{proof}
See Appendix~\ref{app:proofs}.
\end{proof}

%%%%%%%%%%%%%%%%%%%%%%%%%%%%%%%%%%%%%%%%%%%%%%%%%%%%%%%%
\subsubsection{Causal decoupling} \label{sec:decoupling}

In addition to downward causation, causal decoupling takes place when
collective properties have irreducible causal power over other collective
properties. In technical terms:
\begin{definition}
A supervenient feature $V_t$ is said to exhibit \textit{causal decoupling} of
order $k$ if
\begin{equation}
\textnormal{\texttt{Un}}^{(k)}(V_t; V_{t'} | \bX, \bY) > 0~.
\end{equation}
Furthermore, $V_t$ is said to have \textit{pure causal decoupling} if
$\textnormal{\texttt{Un}}^{(k)}(V_t;\bY | \bX) > 0$ and
$\textnormal{\texttt{Un}}^{(k)}(V_t;\bm X^{\alpha}_{t'} | \bX) = 0$ for all
$\alpha\subset [n]$ with $|\alpha|=k$. Finally, a system is said to be
\textit{perfectly decoupled} if all the emergent features exhibit pure causal
decoupling.
\end{definition}

Above, the term $\textnormal{\texttt{Un}}^{(k)}(V_t; V_{t'} | \bX, \bY)$ refers
to information that $V_t$ and $V_{t'}$ share that cannot be found in any
microscopic element, either at time $t$ or $t'$. Note that features that
exhibit causal decoupling could still exert influence over the evolution of
individual elements, while features that exhibit pure decoupling cannot. In
effect, the condition $\textnormal{\texttt{Un}}(V_t;X^j_{t'} | \bX) = 0$
implies that the high-order causal effect does not affect any particular part
-- only the system as a whole. Interestingly, a feature that exhibits pure
causal decoupling can be thought of as having ``a life of its own;'' a sort of
\emph{statistical ghost}, that perpetuates itself over time without any
individual part of the system influencing or being influenced by it. The
system's parity, in the first example of Sec.~\ref{sec:examples}, constitutes a
simple example of perfect causal decoupling.

We close this section by formally establishing the connection between causal
decoupling and the index $\mathcal{G}^{(k)}$ introduced in
Eq.~\eqref{eq:decoupling_index}.

\begin{theorem}\label{prop:decoupled}
A system possesses features that exhibit causal decoupling if and only if
$\mathcal{G}^{(k)}(\bm X_t; \bm X_{t'}) > 0$. Additionally, the system is
perfectly decoupled if $\mathcal{G}^{(k)}(\bm X_t; \bm X_{t'}) > 0$ and
$\mathcal{D}^{(k)}(\bm X_t; \bm X_{t'}) = 0$.
\end{theorem}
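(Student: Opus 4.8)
The plan is to mirror the structure of the proof of Theorem~\ref{prop:emergence_syn}, but now tracking which $\Phi$ID atoms a candidate feature can ``carry.'' First I would establish the two directions of the first claim separately. For the ``if'' direction, assume $\mathcal{G}^{(k)}(\bm X_t;\bm X_{t'}) > 0$, so that there is at least one antichain pair $\bm\alpha,\bm\beta\in\mathcal{S}^{(k)}$ with $I_\partial^{\bm\alpha\rightarrow\bm\beta}(\bm X_t;\bm X_{t'}) > 0$. The natural candidate is the ``maximally synergistic'' feature $V_t$ built from $\bm X_t$ (the same construction used to prove the ``only if'' part of Theorem~\ref{prop:emergence_syn} and Corollary~\ref{cor:syn_cap}), together with the analogous construction of $V_{t'}$ from $\bm X_{t'}$; one then shows that the $\Phi$ID atoms indexed by $\mathcal{S}^{(k)}\times\mathcal{S}^{(k)}$ are exactly the ones that survive into $\texttt{Un}^{(k)}(V_t; V_{t'} \mid \bm X_t, \bm X_{t'})$, so this quantity is at least $\mathcal{G}^{(k)}$ and hence strictly positive, giving causal decoupling. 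For the ``only if'' direction, suppose some supervenient $V_t$ has $\texttt{Un}^{(k)}(V_t; V_{t'} \mid \bm X_t, \bm X_{t'}) > 0$; using the data-processing / monotonicity properties of PID listed in Appendix~\ref{app:highorderPID} (the same basic properties the paper says the theory relies on), one bounds $\texttt{Un}^{(k)}(V_t; V_{t'} \mid \bm X_t, \bm X_{t'}) \leq \mathcal{G}^{(k)}(\bm X_t;\bm X_{t'})$ — an analogue of Corollary~\ref{cor:syn_cap} at the $\Phi$ID level — so $\mathcal{G}^{(k)} > 0$.

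For the second claim, about \emph{perfect} decoupling, I would argue as follows. Assume $\mathcal{G}^{(k)}(\bm X_t;\bm X_{t'}) > 0$ and $\mathcal{D}^{(k)}(\bm X_t;\bm X_{t'}) = 0$. By Eq.~\eqref{eq:syn_decomp_sec} this means $\texttt{Syn}^{(k)}(\bm X_t;\bm X_{t'}) = \mathcal{G}^{(k)}(\bm X_t;\bm X_{t'})$, i.e. all of the system's dynamical synergy is of the decoupled type. Now take any emergent feature $V_t$. By Corollary~\ref{cor:syn_cap}, its downward-causation-relevant unique informations $\texttt{Un}^{(k)}(V_t; \bm X^{\alpha}_{t'} \mid \bm X_t)$ are controlled by the portion of the synergy whose $\Phi$ID targets lie in $\mathcal{A}\setminus\mathcal{S}^{(k)}$ — precisely the atoms counted by $\mathcal{D}^{(k)}$ — and since $\mathcal{D}^{(k)} = 0$ this forces $\texttt{Un}^{(k)}(V_t; \bm X^{\alpha}_{t'} \mid \bm X_t) = 0$ for every $\alpha$ with $|\alpha| = k$. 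On the other hand, since $V_t$ is emergent we have, by Theorem~\ref{prop:emergence_syn} and the first part of the present theorem, $\texttt{Un}^{(k)}(V_t; \bm X_{t'} \mid \bm X_t) > 0$, and as all this unique information must live in the non-downward atoms it in particular gives $\texttt{Un}^{(k)}(V_t; \bm X_{t'} \mid \bm X_t) > 0$ with no contribution to any $k$-plet target, which is exactly pure causal decoupling. Hence every emergent feature is purely decoupled, i.e. the system is perfectly decoupled.

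The main obstacle I anticipate is the bookkeeping in the ``only if'' direction of the first claim and in the first step of the perfect-decoupling argument: one needs a clean $\Phi$ID-level analogue of Corollary~\ref{cor:syn_cap} asserting that the unique information a supervenient feature can carry about a given family of future targets is bounded by the sum of exactly those $\Phi$ID atoms $I_\partial^{\bm\alpha\rightarrow\bm\beta}$ with $\bm\alpha\in\mathcal{S}^{(k)}$ and $\bm\beta$ in the appropriate sublattice. Making this precise requires carefully invoking the supervenience Markov condition $V_t - \bm X_t - \bm X_{t'}$ (so that no information bypasses $\bm X_t$) together with the monotonicity axioms of the chosen PID, and checking that the antichain indices partition the way the definitions of $\mathcal{G}^{(k)}$ and $\mathcal{D}^{(k)}$ demand. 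I would isolate this bound as a small lemma (proved in Appendix~\ref{app:proofs} alongside the other structural facts), after which both halves of Theorem~\ref{prop:decoupled} follow by the counting arguments sketched above. I would also double-check the edge case where $\texttt{Syn}^{(k)} = 0$: then both indices vanish, there are no emergent features at all by Theorem~\ref{prop:emergence_syn}, and ``perfectly decoupled'' holds vacuously, which is consistent with the stated sufficient condition since its hypothesis $\mathcal{G}^{(k)} > 0$ is not met.
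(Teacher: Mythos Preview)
Your proposal is correct and essentially matches the paper's approach: the witness feature in the ``if'' direction is simply the trivial $V_t = \bm X_t$ (as in the proof of Theorem~\ref{prop:emergence_syn}), the $\Phi$ID-level bound you anticipate needing as a small lemma is exactly Corollary~\ref{cor:unsyn_phiid}, and for perfect decoupling the paper likewise just invokes Theorems~\ref{prop:emergence_syn} and~\ref{prop:downward} in the way you sketch.
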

\begin{proof}
See Appendix~\ref{app:proofs}.
\end{proof}

%%%%%%%%%%%%%%%%%%%%%%%%%%%%%%%%%%%%%%%%%%%%%%%%%%%%%%%%
%%%%%%%%%%%%%%%%%%%%%%%%%%%%%%%%%%%%%%%%%%%%%%%%%%%%%%%%
\section{Measuring emergence}
\label{sec:measuring}

This section explores methods to operationalise the framework presented in the
previous section. We discuss two approaches: first,
Section~\ref{sec:general_tools} introduces sufficiency criteria that are
practical for use in large systems; then,
Section~\ref{sec:quantifying_synergies} illustrates how further considerations
can be made if one adopts a specific method of computing $\Phi$ID atoms. The
latter approach provides accurate discrimination at the cost of being
data-intensive and hence only applicable to small systems; the former can be
computed in large systems and its results hold independently of the chosen PID,
but is vulnerable to misdetections (i.e. false negatives).

%%%%%%%%%%%%%%%%%%%%%%%%%%%%%%%%%%%%%%%%%%%%%%%%%%%%%%%%
\subsection{Practical criteria for large systems}\label{sec:general_tools}

While theoretically appealing, our proposed framework suffers from the
challenge of estimating joint probability distributions over many random
variables, and the computation of the $\Phi$ID atoms themselves. As an
alternative, we consider approximation techniques that do not require the
adoption of any particular PID or $\Phi$ID function and are data-efficient,
since they are based on pairwise distributions only.

As practical criteria to measure causal emergence of order $k$, we introduce
the quantities $\Psi_{t,t'}^{(k)}$, $\Delta_{t,t'}^{(k)}$, and
$\Gamma_{t,t'}^{(k)}$. For simplicity, we write here the special case $k=1$,
and provide full formulae for arbitrary $k$ and accompanying proofs in
Appendix~\ref{app:proofs2}:
\begin{subequations}
\begin{align}
\Psi_{t,t'}^{(1)}(V)&\coloneqq I(V_t; V_{t'}) - \sum_{j} I(X_{t}^{j}; V_{t'} )~, \\
\Delta_{t,t'}^{(1)}(V)&\coloneqq \max_{j} \left( I(V_t; X^{j}_{t'}) - \sum_{i} I(X_{t}^{i}; X^{j}_{t'} ) \right)~, \\
\Gamma_{t,t'}^{(1)}(V)&\coloneqq \max_{j} I(V_t; X_{t'}^{j})~.
\end{align}
\label{eq:emergence_psi}%
\end{subequations}

Our next result links these quantities with the formal definitions in
Section~\ref{sec:formal_theory}, showing their value as practical criteria to
detect causal emergence.

\begin{proposition}\label{prop:emergence_psi}
$\Psi_{t,t'}^{(k)}(V) > 0$ is a sufficient condition for $V_t$ to be causally
emergent. Similarly, $\Delta_{t,t'}^{(k)}(V) > 0$ is a sufficient condition for
$V_t$ to exhibit downward causation. Finally, $\Psi_{t,t'}^{(k)}(V) > 0$ and
$\Gamma_{t,t'}^{(k)}(V) = 0$ is sufficient for causal decoupling.
\end{proposition}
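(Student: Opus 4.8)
The plan is to translate every quantity into sums of PID atoms and use only two structural facts, both among the properties assumed in Appendix~\ref{app:highorderPID}: the consistency identity, which expresses any marginal $I(\bm X^{\gamma};Y)$ as the sum of the atoms $I_\partial^{\bm\alpha}(\bm X;Y)$ whose antichain $\bm\alpha$ contains a member $B\subseteq\gamma$; and the nonnegativity of each $I_\partial^{\bm\alpha}$. The heart of the argument is a single lemma, which I state for $k=1$ (the general $k$ being identical with ``singleton'' replaced by ``group of size $\le k$'', using the full formulae of Appendix~\ref{app:proofs2}). Form the PID with sources $(V_t,\bm X_t)$, writing $0$ for the index of $V_t$, and with an arbitrary target $Y$; then $I(V_t;Y)=\sum_{\bm\alpha\ni\{0\}}I_\partial^{\bm\alpha}$, and the antichains containing $\{0\}$ split exhaustively into those lying in $\mathcal{U}^{(1)}(\{0\})$ --- whose sum is exactly $\textnormal{\texttt{Un}}^{(1)}(V_t;Y\mid\bm X_t)$ --- and a remainder, every member of which also contains some singleton $\{i\}$ with $i\in[n]$ and is therefore also counted in $I(X_t^i;Y)$. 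Nonnegativity then gives $\textnormal{\texttt{Un}}^{(1)}(V_t;Y\mid\bm X_t)\ge I(V_t;Y)-\sum_i I(X_t^i;Y)$.

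For the downward-causation claim, let $j^\star$ attain the maximum defining $\Delta_{t,t'}^{(1)}(V)$; applying the lemma with target $Y=X_{t'}^{j^\star}$ gives $\textnormal{\texttt{Un}}^{(1)}(V_t;X_{t'}^{j^\star}\mid\bm X_t)\ge \Delta_{t,t'}^{(1)}(V)>0$, which is precisely downward causation of order $1$ with $\alpha=\{j^\star\}$; no further step is needed since the target is already a single microscopic variable. For the causal-emergence claim, the lemma with target $Y=V_{t'}$ gives $\textnormal{\texttt{Un}}^{(1)}(V_t;V_{t'}\mid\bm X_t)\ge \Psi_{t,t'}^{(1)}(V)>0$; it then remains to upgrade the target from $V_{t'}$ to the finer $\bm X_{t'}$. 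Here I would use that $V_{t'}$ is supervenient on $\bm X_{t'}$, so $\bm X_{t'}$ is informationally at least as rich, together with the monotonicity of $\textnormal{\texttt{Un}}^{(k)}$ under replacing the target by a more informative one, to conclude $\textnormal{\texttt{Un}}^{(1)}(V_t;\bm X_{t'}\mid\bm X_t)\ge \textnormal{\texttt{Un}}^{(1)}(V_t;V_{t'}\mid\bm X_t)>0$, which is Definition~\ref{def:emergence_unique}.

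For causal decoupling I would run the same atom count in the larger PID whose sources are $(V_t,\bm X_t,\bm X_{t'})$ and whose target is $V_{t'}$: now $\textnormal{\texttt{Un}}^{(1)}(V_t;V_{t'}\mid\bm X_t,\bm X_{t'})=I(V_t;V_{t'})-\sum_{\bm\alpha\in R}I_\partial^{\bm\alpha}$, where $R$ collects the antichains containing $\{0\}$ and at least one other singleton, now drawn from all $2n$ microscopic indices. The hypothesis $\Gamma_{t,t'}^{(1)}(V)=0$ says $I(V_t;X_{t'}^j)=0$ for every $j$, hence $V_t$ is independent of each $X_{t'}^j$; since independent sources carry no mutual redundancy about any target, every atom lying below $\{\{0\},\{j'\}\}$ in the redundancy lattice --- in particular every member of $R$ that contains a time-$t'$ singleton $\{j'\}$ --- vanishes. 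Thus only the members of $R$ whose extra singletons are all time-$t$ indices survive, and these are bounded by $\sum_i I(X_t^i;V_{t'})$ exactly as in the lemma, giving $\textnormal{\texttt{Un}}^{(1)}(V_t;V_{t'}\mid\bm X_t,\bm X_{t'})\ge I(V_t;V_{t'})-\sum_i I(X_t^i;V_{t'})=\Psi_{t,t'}^{(1)}(V)>0$.

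Two points need care. First, the target-monotonicity step in the causal-emergence claim must be justified at the level of the coarse-grained $\textnormal{\texttt{Un}}^{(k)}$, not just for a single bivariate unique-information atom; this is really where the supervenience of $V_{t'}$ enters, and it is the step most sensitive to the precise list of PID properties adopted in Appendix~\ref{app:highorderPID}. Second, for general $k$ the remainder $R$ in the decoupling argument also contains groups of size $\le k$ mixing time-$t$ and time-$t'$ indices; showing that these are still absorbed --- by the general form of $\Psi_{t,t'}^{(k)}$ together with the independence relations implied by $\Gamma_{t,t'}^{(k)}=0$ --- is the main bookkeeping obstacle, and is where the general-$k$ formulae of Appendix~\ref{app:proofs2} do their work. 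Note that the $\Phi$ID refinement $I_\partial^{\bm\alpha}=\sum_{\bm\beta}I_\partial^{\bm\alpha\rightarrow\bm\beta}$ plays no role in this proposition; it is only needed later, when these sufficient conditions are related to the indices $\mathcal{D}^{(k)}$ and $\mathcal{G}^{(k)}$.
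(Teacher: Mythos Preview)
Your approach differs from the paper's, and it rests on an assumption the paper does not grant.

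The paper never bounds $\texttt{Un}^{(k)}(V_t;\cdot\mid\bm X_t)$ directly. It routes everything through the coarse synergy: for causal emergence, Shannon's data processing inequality gives $I(V_t;V_{t'})\le I(\bm X_t;V_{t'})$, the \emph{whole-minus-sum} axiom then yields $I(\bm X_t;V_{t'})-\sum_{|\alpha|=k}I(\bm X_t^{\alpha};V_{t'})\le\texttt{Syn}^{(k)}(\bm X_t;V_{t'})$, and the target DPI for $\texttt{Syn}^{(k)}$ lifts this to $\texttt{Syn}^{(k)}(\bm X_t;\bm X_{t'})$; Theorem~\ref{prop:emergence_syn} finishes. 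Downward causation is handled analogously, landing on $\mathcal{D}^{(k)}$ and invoking Theorem~\ref{prop:downward}; so the $\Phi$ID machinery \emph{does} enter the paper's proof, contrary to your closing remark. For causal decoupling the paper only uses the coarse bound $I(V_t;\bm X_{t'}^{\alpha})\ge\texttt{Un}^{(k)}(V_t;\bm X_{t'}^{\alpha}\mid\bm X_t)$, so that $\Gamma=0$ annihilates every downward-causation term.

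Your central lemma needs nonnegativity of every individual atom $I_\partial^{\bm\alpha}$: when you bound the ``remainder'' by $\sum_i I(X_t^i;Y)$ you rely on the remainder atoms being nonnegative \emph{and} on the extra atoms appearing in each $I(X_t^i;Y)$ (those not in the remainder) being nonnegative. But atom-level nonnegativity is \emph{not} among the properties in Appendix~\ref{app:highorderPID}---the paper only assumes nonnegativity of the coarse-grained $\texttt{Syn}^{(k)}$ and $\texttt{Un}^{(k)}$, precisely because several standard PIDs have negative atoms. Your statement that this is ``among the properties assumed in Appendix~\ref{app:highorderPID}'' is incorrect, and without it your bound need not hold. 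The paper sidesteps the issue by taking whole-minus-sum as a primitive axiom rather than deriving it from atom positivity. Two further ingredients you use are likewise absent from the paper's list: target-monotonicity of $\texttt{Un}^{(k)}$ (which you rightly flag), and, in your decoupling argument, the claim that $V_t\perp X_{t'}^{j}$ forces every atom below $\{\{0\},\{j'\}\}$ to vanish. The paper avoids the first by passing through $\texttt{Syn}^{(k)}$, which \emph{is} equipped with a target DPI, and avoids the second by using only the coarse bound $\texttt{Un}^{(k)}\le I$.

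If one grants your stronger axiom set, your route does buy something: you would establish directly that $V_t$ itself satisfies Definition~\ref{def:emergence_unique} (resp.\ the downward-causation definition), whereas the paper, by detouring through $\texttt{Syn}^{(k)}$ and Theorems~\ref{prop:emergence_syn}--\ref{prop:downward}, strictly concludes only that the \emph{system} admits some emergent (resp.\ downward-causing) feature.
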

\begin{proof}
See Appendix~\ref{app:proofs2}.
\end{proof}

Although calculating whether a system has emergent features via
Proposition~\ref{prop:emergence_syn} may be computationally challenging, if one
has a candidate feature $V$ one believes may be emergent, one can compute the
simple quantities in Eqs.~\eqref{eq:emergence_psi} which depend only on
standard mutual information and bivariate marginals, and scales linearly with
system size (for $k=1$). These quantities are easy to compute and test for
significance using standard information-theoretic
tools~\cite{kraskov2004estimating,lizier2014jidt}. Moreover, the outcome of
these measures is valid for any choice of PID and $\Phi$ID that is compatible
with the properties specified in Appendix~\ref{app:highorderPID}.

In a broader context, $\Psi_{t,t'}^{(k)}$ and $\Delta_{t,t'}^{(k)}$ belong to
the same \emph{whole-minus-sum} family of measures as the interaction
information~\cite{mcgill1954multivariate,williams2010nonnegative}, the
redundancy-synergy index~\cite{timme2014synergy} and, more recently, the
O-information $\Omega$~\cite{rosas2019quantifying} -- which cannot measure
synergy by itself, but only the balance between synergy and redundancy. In
practice, this means that if there is redundancy in the system it will be
harder to detect emergence, since redundancy will drive $\Psi_{t,t'}^{(k)}$ and
$\Delta_{t,t'}^{(k)}$ more negative. Furthermore, by summing all marginal
mutual informations (e.g. $I(X_t^j; V_{t'})$ in the case of
$\Psi_{t,t'}^{(1)}$), these measures effectively \emph{double-count} redundancy
up to $n$ times, further penalising the criteria. This problem of
double-counting can be avoided if one is willing to commit to a particular PID
or $\Phi$ID function, as we show next.

%%%%%%%%%%%%%%%%%%%%%%%%%%%%%%%%%%%%%%%%%%%%%%%%%%%%%%%%
\subsection{Measuring emergence via synergistic channels}
\label{sec:quantifying_synergies}

This section leverages recent work on information decomposition reported in
Ref.~\cite{rosas2020operational}, and presents a way of directly measuring the
emergence capacity and the indices of downward causation and causal decoupling.
The key takeaway of this section is that if one adopts a particular $\Phi$ID,
then it is possible to evaluate $\mathcal{D}^{(k)}$ and $\mathcal{G}^{(k)}$
directly, providing a direct route to detect emergence without
double-counting redundancy, as the methods introduced in
Section~\ref{sec:general_tools} do. Moreover, additional properties may 
become available due to the characteristics of the particular $\Phi$ID chosen.

Let us first introduce the notion of $k$-synergistic channels: mappings $p_{V|
\bm X}$ that convey information about $\bm X$ but not about any of the parts
$\bm X^{\alpha}$ for all $|\alpha|=k$. The set of all $k$-synergistic channels
is denoted by
\begin{equation}\label{def_set}
\mathcal{C}_k(\bm X) = \bigg\{ p_{V|\bm X} \:\bigg| \:V\independent \bm X^{\alpha},\forall \alpha\subseteq [n], |\alpha|=k \bigg\}.
\end{equation}
A variable $V$ generated via a $k$-synergistic channel is said to be a
$k$-synergistic observable.

With this definition, we can consider the $k$\textsuperscript{th}-order synergy
to be the maximum information extractable from a $k$-synergistic channel:
\begin{equation}\label{eq:def_syn}
\texttt{Syn}^{(k)}_\star(\bX; \bY) \coloneqq \sup_{\substack{p_{V|\bX}\in \mathcal{C}_k(\bX):\\V-\bX-\bY}} I(V;\bY)~.
\end{equation} 
This idea can be naturally extended to the case of causal decoupling by
requiring synergistic channels at both sides, i.e.
\begin{equation}\label{eq:def_syn}
\mathcal{G}^{(k)}_\star (\bX; \bY) \coloneqq 
\sup_{\substack{
p_{V|\bX}\in \mathcal{C}_k(\bX),\\
p_{U|\bY}\in \mathcal{C}_k(\bY):\\V-\bX-\bY- U}} I(V;U)~.
\end{equation} 
Finally, the downward causation index can be computed from the difference
\begin{equation}
\mathcal{D}^{(k)}_\star (\bm X_t; \bm X_{t'}) \coloneqq \texttt{Syn}^{(k)}_\star (\bm X_t; \bm X_{t'}) - \mathcal{G}^{(k)}_\star (\bm X_t; \bm X_{t'})~.
\end{equation}
Note that $\texttt{Syn}^{(k)}_\star \geq \mathcal{G}^{(k)}_\star$, which is a
direct consequence of the data processing inequality applied on $V-\bm X_t -
\bm X_{t'} - U$, and therefore $\mathcal{D}_\star^{(k)},
\mathcal{G}_\star^{(k)} \geq 0$.

By exploiting the properties of this specific way of measuring synergy, one can
prove the following result. For this, let us say that a feature $V_t$ is
auto-correlated if $I(V_t;V_{t'}) > 0$.
\begin{proposition}\label{prop:autoemergence}
If \bX is stationary, all auto-correlated $k$-synergistic observables are
$k$\textsuperscript{th}-order emergent.
\end{proposition}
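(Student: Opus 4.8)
The plan is to show that a $k$-synergistic observable necessarily carries \emph{all} of its predictive information about the future state as information that is unique at level $k$, so that auto-correlation alone forces $\texttt{Un}^{(k)}(V_t;\bm X_{t'}\mid\bm X_t)>0$, which is exactly causal emergence of order $k$ by Definition~\ref{def:emergence_unique}. To set up, note that since $V$ is a $k$-synergistic observable it is produced by a fixed channel $p_{V\mid\bm X}\in\mathcal{C}_k(\bm X)$, so $V_t\independent\bm X_t^{\alpha}$ for every $\alpha$ with $|\alpha|=k$; since (for $n\ge k$) any group of size at most $k$ is a deterministic image of some $k$-group, this also gives $V_t\independent\bm X_t^{\gamma}$ for all $|\gamma|\le k$. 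Stationarity lets the same channel act at $t'$, so that $V_{t'}$ is likewise $k$-synergistic and is generated from $\bm X_{t'}$ alone; combined with supervenience of $V_t$ this yields the Markov chain $V_t-\bm X_t-\bm X_{t'}-V_{t'}$. The data-processing inequality along this chain gives $I(V_t;\bm X_{t'})\ge I(V_t;V_{t'})>0$, so it remains to show that all of this mutual information falls in the level-$k$ unique component.

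For this I would work in the PID of $I\big((V_t,\bm X_t);\bm X_{t'}\big)$ on the source set $\{V_t,X_t^1,\dots,X_t^n\}$ (which equals $I(\bm X_t;\bm X_{t'})$ by supervenience). The quantity $I(V_t;\bm X_{t'})$ is the sum of the atoms $I_\partial^{\bm\alpha}$ whose antichain contains the singleton $\{V_t\}$; of these, the atoms that fall \emph{outside} $\mathcal{U}^{(k)}(\{V_t\})$ are precisely those whose antichain also contains a group $\gamma\subseteq[n]$ with $|\gamma|\le k$. Each such atom is bounded by the information that $V_t$ and $\bm X_t^{\gamma}$ provide redundantly about $\bm X_{t'}$, and this redundancy vanishes because $V_t\independent\bm X_t^{\gamma}$. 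Hence every nonzero atom of $I(V_t;\bm X_{t'})$ lies in $\mathcal{U}^{(k)}(\{V_t\})$, giving $\texttt{Un}^{(k)}(V_t;\bm X_{t'}\mid\bm X_t)=I(V_t;\bm X_{t'})>0$, as required.

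The main obstacle is the step asserting that statistical independence of $V_t$ from a group $\bm X_t^{\gamma}$ forces zero redundancy between them about the target: this is not an axiom of an arbitrary PID, but it does hold for the operational decomposition of Ref.~\cite{rosas2020operational} on which this section rests, where redundancy is defined through synergistic disclosure and collapses on independent sources. An equivalent and perhaps cleaner route that avoids atom-by-atom bookkeeping is to argue with the starred quantities: the feasible channel $p_{V\mid\bm X_t}\in\mathcal{C}_k(\bm X_t)$ with $V_t-\bm X_t-\bm X_{t'}$ realises $I(V_t;\bm X_{t'})$ inside the supremum defining $\texttt{Syn}^{(k)}_\star(\bm X_t;\bm X_{t'})$ (and, using stationarity, $(V_t,V_{t'})$ is feasible for the one defining $\mathcal{G}^{(k)}_\star$), so that in the operational picture the level-$k$ unique information of such a $V_t$ equals its full mutual information with the target; one would then only need to record that this positive unique information witnesses causal emergence of $V$ itself.
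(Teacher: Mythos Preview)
Your alternative ``cleaner route'' at the end is exactly the paper's proof: stationarity gives $V_{t'}\in\mathcal{C}_k(\bm X_{t'})$, the chain $V_t-\bm X_t-\bm X_{t'}-V_{t'}$ makes $(V_t,V_{t'})$ feasible in the supremum defining $\mathcal{G}^{(k)}_\star$, hence $\mathcal{G}^{(k)}_\star\ge I(V_t;V_{t'})>0$, and then $\texttt{Syn}^{(k)}_\star>0$ via $\mathcal{D}^{(k)}_\star\ge 0$. The paper stops there.

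Your primary route --- decomposing $I\big((V_t,\bm X_t);\bm X_{t'}\big)$ atom by atom and killing the non-unique atoms via independence --- is a genuinely different argument. As you correctly flag, it hinges on the implication ``$V_t\independent\bm X_t^\gamma$ forces zero redundancy about any target,'' which is \emph{not} among the generic PID axioms listed in Appendix~\ref{app:highorderPID} and must be imported from the specific operational decomposition of Ref.~\cite{rosas2020operational}. What this extra work buys you is an explicit verification that $\texttt{Un}^{(k)}(V_t;\bm X_{t'}\mid\bm X_t)>0$, i.e.\ that $V_t$ \emph{itself} is emergent in the sense of Definition~\ref{def:emergence_unique}; the paper's proof, read literally, only concludes $\texttt{Syn}^{(k)}_\star>0$, which via Theorem~\ref{prop:emergence_syn} yields existence of \emph{some} emergent feature rather than emergence of $V_t$ in particular. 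So your primary route is heavier and more fragile in its assumptions, but it closes a step the paper leaves implicit.
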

\begin{proof}
See Appendix~\ref{app:proofs2}.
\end{proof}

In summary, $\mathcal{D}^{(k)}_\star$ and $\mathcal{G}^{(k)}_\star$ provide
data-driven tools to test -- and possibly reject -- hypotheses about emergence
in scenarios of interest. Efficient algorithms to compute these quantities are
discussed in Ref.~\cite{rassouli2019data}. Although current implementations
allow only relatively small systems, this line of thinking shows that future
advances in PID might make the computation of emergence indices more scalable,
avoiding the limitations of Eqs.~\eqref{eq:emergence_psi}.

%%%%%%%%%%%%%%%%%%%%%%%%%%%%%%%%%%%%%%%%%%%%%%%%%%%%%%%%
%%%%%%%%%%%%%%%%%%%%%%%%%%%%%%%%%%%%%%%%%%%%%%%%%%%%%%%%
\section{Case studies}
\label{sec:case_studies}

Let us summarise our results so far. We began by formulating a rigorous
definition of emergent features based on PID (Section~\ref{sec:defining}), and
then used $\Phi$ID to break down the emergence capacity into the causal
decoupling and downward causation indices (Section~\ref {sec:taxonomy}).
Although these are not straightforward to compute, the $\Phi$ID framework
allows us to formulate readily computable sufficiency conditions
(Section~\ref{sec:general_tools}). This section illustrates the usage of those
conditions in various case studies.

%%%%%%%%%%%%%%%%%%%%%%%%%%%%%%%%%%%%%%%%%%%%%%%%%%%%%%%%
\subsection{Canonical examples of putative emergence}

Here we present an evaluation of our practical criteria for emergence
(Proposition~\ref{prop:emergence_psi}) in two well-known systems: Conway's Game
of Life (GoL)~\cite{conway1970game}, and Reynolds' flocking boids
model~\cite{reynolds1987flocks}. Both are widely regarded as paradigmatic
examples of emergent behaviour, and have been thoughtfully studied in the
complexity and artificial life literature~\cite{adamatzky2012collision}.
Accordingly, we use these models as test cases for our methods. Technical
details of the simulations are provided in Appendix~\ref{app:sims}.

%%%%%%%%%%%%%%%%%%%%%%%%%%%%%%%%%%%%%%%%%%%%%%%%%%%%%%%%
\subsubsection{Conway's Game of Life}

A well-known feature of GoL is the presence of \emph{particles}: coherent,
self-sustaining structures known to be responsible for information transfer and
modification~\cite{lizier2010local}. These particles have been the object of
extensive study, and detailed taxonomies and classifications
exist~\cite{adamatzky2012collision,wolfram2002new}.

To test the emergent properties of particles, we simulate the evolution of
15x15 square cell arrays, which we regard as a binary vector $\bX \in
\{0,1\}^n$ with $n=225$. As initial condition, we consider configurations that
correspond to a ``particle collider'' setting, with two particles of known type
facing each other (Figure~\ref{fig:gol}). In each trial, the system is
randomised by changing the position, type, and relative displacement of the
particles. After an intial configuration has been selected, the well-known GoL
evolution rule~\cite{conway1970game} is applied 1000 times, leading to a final
state $\bY$~\footnote{Simulations showed that this interval is enough for the
system to settle in a stable state after the collision.}.
\begin{figure}[t]
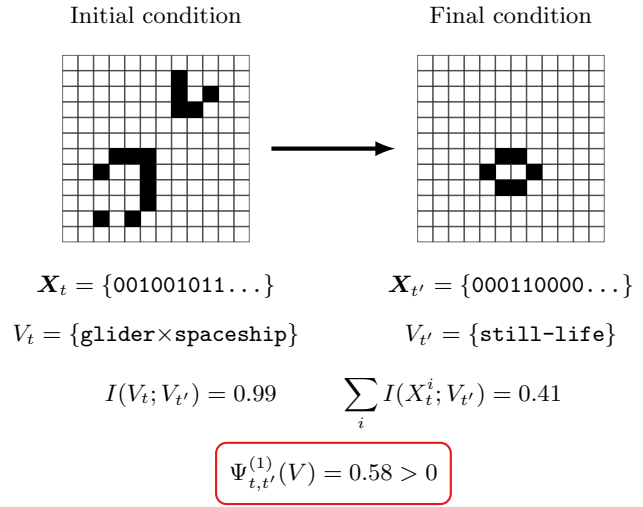

  \centering
  \includetikz{tikz/}{GoLExample}
  \caption{\textbf{Causal emergence in Conway's Game of Life}. The system is
  initialised in a ``particle collider'' setting, and run until a stable
  configuration is reached after the collision. Using particle type as a
  supervenient feature $V$, we find the system meets our practical criterion
  for causal emergence.}

  \label{fig:gol}
\end{figure}

To use the criteria from Eqs.~\ref{eq:emergence_psi}, we need to choose a
candidate emergent feature $V_t$. In this case, we consider a symbolic,
discrete-valued vector that encodes the type of particle(s) present in the
board. Specifically, we consider $\bm V_t = (V_t^1,\dots,V_t^L)$, where $V_t^j
= 1$ iff there is a particle of type $j$ at time $t$ -- regardless of its
position or orientation.

With these variables, we compute the quantities in Eqs.~\ref{eq:emergence_psi}
using Bayesian estimators of mutual information~\cite{archer2013bayesian}. The
result is that, as expected, the criterion for causal emergence is met with
$\Psi^{(1)}_{t,t'}(V) = 0.58 \pm 0.02$. Furthermore, we found that
$\Gamma^{(1)}_{t,t'}(V) = 0.009 \pm 0.0002$, which is orders of magnitude
smaller than $I(V_t; V_{t'}) = 0.99 \pm 0.02$. Using
Proposition~\ref{prop:emergence_psi}, these two results suggest that particle
dynamics in GoL may not only be emergent, but causally decoupled with respect
to their substrate.

%%%%%%%%%%%%%%%%%%%%%%%%%%%%%%%%%%%%%%%%%%%%%%%%%%%%%%%%
\subsubsection{Reynolds' flocking model}

As a second test case, we consider Reynolds' model of flocking behaviour. This
model is composed by \emph{boids} (bird-oid objects), with each boid
represented by three numbers: its position in 2D space and its heading angle.
As candidate feature for emergence, we use the 2D coordinates of the center of
mass of the flock, following Seth~\cite{seth2010measuring}.

In this model boids interact with one another following three rules, each
regulated by a scalar parameter~\cite{seth2010measuring}:
\begin{itemize}[topsep=3pt,itemsep=0pt]
  \item \textbf{aggregation} ($a_1$), as they fly towards the center of the flock;
  \item \textbf{avoidance} ($a_2$), as they fly away from their closest neighbour; and
  \item \textbf{alignment} ($a_3$), as they align their flight direction to that of their neighbours.
\end{itemize}
Following Ref.~\cite{seth2010measuring}, we study small flocks of $N=10$ boids
with different parameter settings to showcase some properties of our practical
criterion of emergence. Note that this study is meant as an illustration of the
proposed theory, and not as a thorough exploration of the flocking model, for
which a vast literature exists (see e.g. the work of
Vicsek~\cite{vicsek2008universal} and references therein).

Figure~\ref{fig:boids} shows the results of a parameter sweep over the
avoidance parameter, $a_2$, while keeping $a_1$ and $a_3$ fixed. When there is
no avoidance, boids orbit around a slowly-moving center of mass, in what could
be called an ordered regime. Conversely, for high values of $a_2$ neighbour
repulsion is too strong for lasting flocks to form, and isolated boids spread
across the space avoiding one another. For intermediate values, the center of
mass traces a smooth trajectory, as flocks form and disintegrate. In line
with the findings of Seth~\cite{seth2010measuring}, our criterion indicates
that the flock exhibits causally emergent behaviour in this intermediate range.

\begin{figure}[t]
  \centering
  \includetikz{tikz/}{BoidsFigure}

  \caption{\textbf{Causal emergence in the flocking boids model}. As the
    avoidance parameter is increased, the flock transitions from an attractive
    regime (in which all boids orbit regularly around a stable center of mass),
    to a repulsive one (in which boids spread across space and no flocking is
visible). \textbf{a)} Our criterion $\Psi$ detects causal emergence in an
intermediate range of the avoidance parameter. \textbf{b)} Sample trajectories
of boids (grey) and their center of mass (\textcolor{BrewerRed}{red}).}

  \label{fig:boids}
\end{figure}

\begin{figure*}[t]
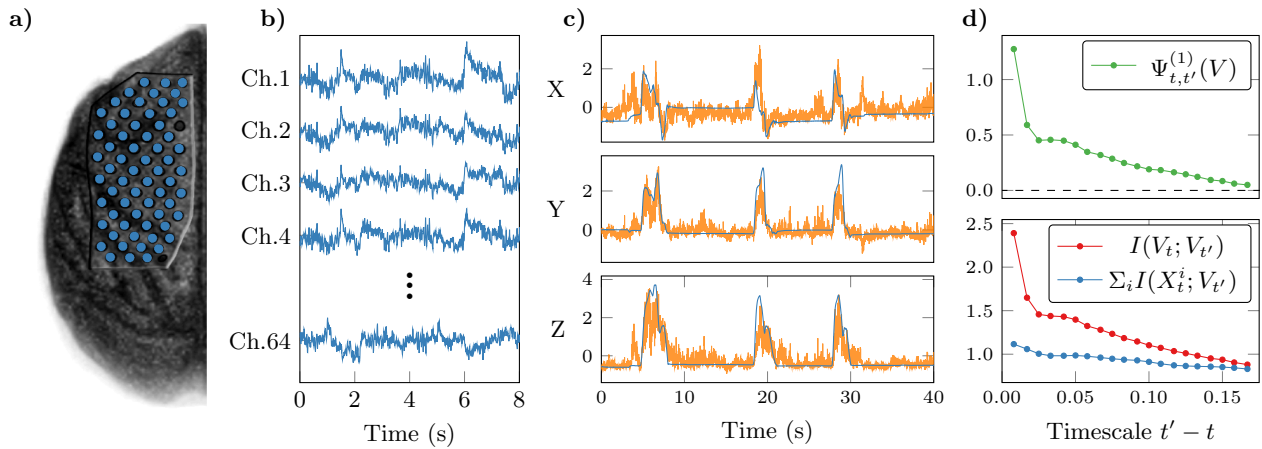

  \centering
  \includetikz{tikz/}{MonkeyFigure}

  \caption{\textbf{Causal emergence in motor behaviour of an awake macaque
monkey}.  \textbf{a)} Position of electrocorticogram (ECoG) electrodes used in
the recording (in blue) overlaid on an image of the macaque's left hemisphere
(front of the brain towards the top of the page). \textbf{b)} Sample time
series from the 64-channel ECoG recordings used, which correspond to the system
of interest $\bX \in \mathbb{R}^{64}$.  \textbf{c)} 3D position of the
macaques's wrist, as measured by motion capture (\textcolor{BrewerBlue}{blue})
and as predicted by the regression model (\textcolor{BrewerOrange}{orange}),
taken as a supervenient feature $V_t \in \mathbb{R}^{3}$. \textbf{d)} Our
emergence criterion yields $\Psi^{(1)}_{t,t'}(V) > 0$, detecting causal
emergence of the behaviour with respect to the ECoG sources.  Original data and
image from Ref.~\cite{chao2010long} and the
\href{www.neurotycho.org}{Neurotycho} database.}

  \label{fig:monkey}
\end{figure*}

By studying separately the two terms that make up $\Psi$ we found that the
criterion of emergence fails for both low and high $a_2$, but for different
reasons (see Figure~\ref{fig:boids}). In effect, for high $a_2$ the
self-predictability of the center of mass (i.e. $I(V_t;V_{t'})$) is low; while
for low $a_2$ it is high, yet lower than the mutual information from individual
boids (i.e. $\sum_i^n I(X_t^i; V_{t'})$). These results suggest that the
low-avoidance scenario is dominated not by a reduction in synergy, but by an
increase in redundancy, which effectively increases the synergy threshold
needed to detect emergence. However, note that, due to the limitations of the
criterion, the fact that $\Psi^{(1)}_{t,t'} < 0$ is inconclusive and does not
rule out the possibility of emergence. This is a common limitation of
whole-minus-sum estimators like $\Psi$; further refinements may provide bounds
that are less susceptible to these issues and perform accurately in these
scenarios.
\vspace{-0.2cm}

%%%%%%%%%%%%%%%%%%%%%%%%%%%%%%%%%%%%%%%%%%%%%%%%%%%%%%%%
\subsection{Mind from matter:\\Emergence, behaviour, and neural dynamics}
\label{sec:mind_from_matter}

A tantalising outcome of having a formal theory of emergence is the capability
of bringing a quantitative angle to the archetype of emergence: the mind-matter
relationship~\cite{dehaene2014consciousness,turkheimer2019conflicting}. 
As a first step in this direction, we conclude this section with an application of 
our emergence criteria to neurophysiological data.

We study simultaneous electrocorticogram (ECoG) and motion capture (MoCap) data
of Japanese macaques performing a reaching task~\cite{chao2010long}, obtained
from the online \href{www.neurotycho.org}{Neurotycho} database. Note that the
MoCap data cannot be assumed to be a supervenient feature of the available ECoG
data, since it doesn't satisfy the conditional independence conditions required
by our definition of supervenience (see Sec.~\ref{sec:formal_theory})
~\footnote{This is likely to be the case because the neural system is only
partially observed -- i.e. the ECoG does not capture every source of relevant
activity in the macaque's cortex. Note that non-supervenient features are of
limited interest within our framework, as they can satisfy
Proposition~\ref{prop:emergence_psi} in trivial ways (e.g. time series which
are independent of the underlying system satisty $\Psi > 0$ if they are
self-correlated).}. Instead, we focus on the portion of neural activity encoded
in the ECoG signal that is relevant to predict the macaque's behaviour, and
conjecture this information to be an emergent feature of the underlying neural
activity (Fig.~\ref{fig:monkey}).

To test this hypothesis, we take the neural activity (as measured by 64 ECoG
channels distributed across the left hemisphere) to be the system of interest,
and consider a memoryless predictor of the 3D coordinates of the macaque's
right wrist based on the ECoG signal. Therefore, in this scenario $\bm
X_t\in\mathbb{R}^{64}$ and $V_t = F(\bm X_t) \in \mathbb{R}^3$. To build $V_t$,
we used Partial Least Squares (PLS) and a Support Vector Machine (SVM)
regressor, the details of which can be found in Appendix~\ref{app:ecog}.

After training the decoder and evaluating on a held-out test set, results show
that $\Psi > 0$, confirming our conjecture that the motor-related information
is an emergent feature of the macaque's cortical activity. For short timescales
($t'-t = 8\,\mathrm{ms}$), we find $\Gamma_{t,t'}^{(1)}(V) = 0.049 \pm 0.002$,
which is orders of magnitude smaller than $\Psi_{t,t'}^{(1)}(V) = 1.275 \pm
0.002$, suggesting that the behaviour may have an important component decoupled
from individual ECoG channels. Furthermore, the emergence criterion is met for
multiple timescales $t'-t$ of up to $\approx 0.2\,\mathrm{s}$, beyond which the
predictive power in $V_t$ and individual electrodes decrease and become nearly
identical.

This analysis, while just a proof of concept, helps us quantify how and to what
extent behaviour emerges from collective neural activity; and opens the door to
further tests and quantitative empirical explorations of the mind-matter
relationship.

%%%%%%%%%%%%%%%%%%%%%%%%%%%%%%%%%%%%%%%%%%%%%%%%%%%%%%%%
%%%%%%%%%%%%%%%%%%%%%%%%%%%%%%%%%%%%%%%%%%%%%%%%%%%%%%%%
\section{Discussion}
\label{sec:discussion}

A large fraction of the modern scientific literature considers strong emergence
to be impossible or ill-defined. This judgement is not fully unfounded: a
property that is simultaneously supervenient (i.e. that can be computed from
the state of the system) and that has irreducible causal power (i.e. that
``tells us something'' that the parts don't) can indeed seem to be an
oxymoron~\cite{bedau1997weak}. Nonetheless, by linking supervenience to static
and causal power to dynamical properties, our framework shows that these two
phenomena are perfectly compatible within the -- admittedly counterintuitive --
laws of multivariate information dynamics~\cite{mediano2019beyond}, providing a
tentative solution to this paradox. 

Our theory of causal emergence is about predictive power, not
``explicability''~\cite{chalmers2006strong}, and therefore is not related to
views on strong emergence such as Chalmers'~\cite{chalmers2006strong}.
Nevertheless, our framework embraces aspects that are commonly associated with
strong emergence -- such as downward causation -- and renders them
quantifiable. Our framework also does not satisfy conventional definitions of
weak emergence~\footnote{Note that the systems studied in
Section~\ref{sec:examples} are not weakly emergent in the sense of
Bedau~\cite{bedau1997weak}, being simple and susceptible to explanatory
shortcuts.}, but is compatible with more general notions of weak emergence,
e.g. the one introduced by Seth (see Sec.~\ref{sec:hoel}). Hence, our theory
can be seen as an attempt at reconciling these approaches~\cite{turkheimer2019conflicting}, 
showing how ``strong'' a ``weak'' framework can be.

An important consequence of our theory is the fundamental connection
established between causal emergence and statistical synergy: the system's
capacity to host emergent features was found to be determined by how
synergistic its elements are with respect to their future evolution. Although
previous ideas about synergy have been loosely linked to emergence in the
past~\cite{corning1998synergism}, this is (to the best of our knowledge) the
first time such ideas have been formally laid out and quantified using recent
advances in multivariate information theory.

Next, we examine a few caveats regarding the applicability of the proposed
theory, its relation with prior work, and some open problems.

%%%%%%%%%%%%%%%%%%%%%%%%%%%%%%%%%%%%%%%%%%%%%%%%%%%%%%%%
\subsection{Scope of the theory}

Our theory focuses on \emph{synchronic}~\cite{rueger2000physical} aspects of
emergence, analysing the interactions between the elements of dynamical systems
and collective properties of them as they jointly evolve over time. As such,
our theory directly applies to any system with well-defined dynamics, including
systems described by deterministic dynamical systems with random initial
conditions~\cite{rosas2018selforg} and stochastic systems described by
Fokker-Planck equations~\cite{breuer2002theory}. In contrast, the application
of our theory to systems in thermodynamic equilibrium may not be
straightforward, as their dynamics are often not uniquely specified by the
corresponding Gibbs distributions~\footnote{ For an explicit example, when
considering the Ising model, Kawasaki and Glauber dynamics are known to behave
differently even when the system is in equilibrium~\cite{smith2008interfaces};
and thus may provide quite different values of the measures described
in~\ref{sec:measuring}.}. Finding principled approaches to guide the
application of our theory to those cases is an interesting challenge for future
studies.

In addition, given the breadth of the concept of ``emergence,'' there are a
number of other theories leaning more towards philosophy that are orthogonal to
our framework. This includes, for example, theories of emergence as
\textit{radical novelty} (in the sense of features not previously observed in
the system)~\cite{corning2002reemergence}, most prominently encapsulated in the
aphorism ``more is different'' by Anderson~\footnote{See
Refs.~\cite{anderson1972more,anderson2018basic}, particularly his approach to
emergence in biology. Note that some of Anderson's views -- particularly the
ones related to rigidity -- are nevertheless closely related to the approach
developed by our framework.}, and also articulated in the work of
Kauffman~\cite{kauffman2006emergence,kauffman2019world}. Also,
\textit{contextual emergence} emphasises a role for macro-level contexts that
cannot be described at the micro-level, but which impose constraints on the
micro-level for the emergence of the
macro~\cite{bishop2006contextual,atmanspacher2009contextual}. These are
valuable philosophical positions, which have been studied from a statistical
mechanics perspective in
Ref.~\cite{jensen2018statistical,bishop2006contextual}. Future work shall
attempt to unify these other approaches with our proposed framework.

%%%%%%%%%%%%%%%%%%%%%%%%%%%%%%%%%%%%%%%%%%%%%%%%%%%%%%%%
%%%%%%%%%%%%%%%%%%%%%%%%%%%%%%%%%%%%%%%%%%%%%%%%%%%%%%%%
\subsection{Causality}

The \textit{de facto} way to assess the causal structure of a system is to
analyse its response to controlled interventions or to build intervention
models (causal graphs) based on expert knowledge, which leads to the well-known
\textit{do-calculus} spearheaded by Judea Pearl~\cite{pearl2000causality}. This
approach is, unfortunately, not applicable in many scenarios of interest, as
interventions may incur prohibitive costs or even be impossible, and expert
knowledge may not be available. These scenarios can still be assessed via the
Wiener-Granger theory of \textit{statistical causation}, which studies the
blueprint of predictive power across the system of interest by accounting
non-mediated correlations between past and future
events~\cite{bressler2011wiener}. Both frameworks provide similar results when
all the relevant variables have been measured, but can neverthelss differ
radically when there are unobserved interacting
variables~\cite{pearl2000causality}. The debate between the Wiener-Granger and
the Pearl schools has been discussed in other related contexts -- see e.g.
Refs.~\cite{mediano2018measuring,barrett2019phi} for a discussion regarding
Integrated Information Theory (IIT), and Ref.~\cite{seth2015granger} for a
discussion about effective and functional connectivity in the context of
neuroimaging time series analysis~\footnote{ In a nutshel, effective
connectivity aims to uncover the minimal physical causal mechanism underlying
the observed data, while functional connectivity describes directed or
undirected statistical dependences~\cite{seth2015granger}.}.

In our theory, the main object of analysis is Shannon's mutual information,
$I(\bX; \bY)$, which depends on the joint probability distribution $p_{\bX,
\bY}$. The origin of this distribution (whether it was obtained by passive
observation or by active intervention) will change the interpretation of the
quantities presented above, and will speak differently to the Pearl and the
Wiener-Granger schools of thought; some of the implications of these
differences are addressed when discussing Ref.~\cite{hoel2013quantifying}
below. Nonetheless, since both methods of obtaining $p_{\bX, \bY}$ allow
synergy to take place, our results are in principle applicable in both
frameworks -- which allows us to formulate our theory of causal emergence
without taking a rigid stance on a theory of causality itself.

%%%%%%%%%%%%%%%%%%%%%%%%%%%%%%%%%%%%%%%%%%%%%%%%%%%%%%%%
%%%%%%%%%%%%%%%%%%%%%%%%%%%%%%%%%%%%%%%%%%%%%%%%%%%%%%%%
\subsection{Relationship with other quantitative theories of emergence}
\label{sec:hoel}

This work is part of a broader movement towards formalising theories of
complexity through information theory. In particular, our framework is most
directly inspired by the work of Seth~\cite{seth2010measuring} and Hoel
\emph{et al.}~\cite{hoel2013quantifying}, and also related to recent work by
Chang \emph{et al.}~\cite{chang2019information}. This section gives a brief
account of these theories, and discusses how they differ from our proposal.

Seth~\cite{seth2010measuring} proposes that a process $V_t$ is Granger-emergent
(or \textit{G-emergent}) with respect to $\bX$ if two conditions are met: (i)
$V_t$ is autonomous with respect to $\bX$ (i.e. $I(V_t; V_{t'} | \bX)>0$), and
(ii) $V_t$ is G-caused by $\bX$ (i.e. $I(\bX; V_{t'} | V_t)>0$). The latter
condition is employed to guarantee a relationship between $\bX$ and $V_t$; in
our framework an equivalent role is taken by the requirement of supervenience.
The condition of autonomy is certainly related with our notion of causal
decoupling. However, as shown in Ref.~\cite{williams2010nonnegative}, the
conditional mutual information conflates unique and synergistic information,
which can give rise to undesirable situations: for example, it could be that
$I(V_t; V_{t'} | \bX) >0$ while, at the same time, $I(V; V_{t'}) = 0$, meaning
that the dynamics of the feature $V_t$ are only visible when considering it
together with the full system \bX, but not on its own. Our framework avoids
this problem by refining this criterion via PID, and uses only the unique
information for the definition of emergence.

Our work is also strongly influenced by the framework put forward by Hoel and
colleagues in Ref.~\cite{hoel2013quantifying}. Their approach is based on a
coarse-graining function $F(\cdot)$ relating a feature of interest to the
system, $V_t = F(\bX)$, which is a particular case of our more general
definition of supervenience. Emergence is declared when the dependency between
$V_t$ and $V_{t'}$ is ``stronger'' than the one between $\bX$ and $\bY$. Note
that $V_t - \bX - \bY - V_{t'}$ is a Markov chain, and hence $I(V_t;V_{t'})
\leq I(\bX;\bY)$ due to the data processing inequality; therefore, a direct
usage of Shannon's mutual information would make the above criterion impossible
to fulfill. Instead, this framework focuses on the transition probabilities
$p_{V_{t'}|V_t}$ and $p_{\bm X_{t'}|\bm X_t}$, and hence the mutual information
terms are computed using maximum entropy distributions instead of the
stationary marginals. By doing this, Hoel \textit{et al.} account not for what
the system \emph{actually does}, but for all the potential transitions the
system \emph{could do}. However, in our view this approach is not well-suited
to assess dynamical systems, as it might account for transitions that are never
actually explored~\footnote{The difference between stationary and maximum
entropy distributions can be particularly dramatic in non-ergodic systems with
multiple attractors. For a related discussion in the context of Integrated
Information Theory, see Ref.~\cite{barrett2019phi}.}. Additionally, this
framework relies on having exact knowledge about the microscopic transitions as
encoded by $p_{\bm X_{t'}|\bm X_t}$, which is not possible to obtain in most
applications.

Finally, Chang \emph{et al.}~\cite{chang2019information} consider supervenient
variables that are ``non-trivially informationally closed'' (NTIC) to their
corresponding microscopic substrate. NTIC is based on a division of $\bX$ into
a subsystem of interest, $\bX^{\alpha}$, and its ``environment'' given by
$\bX^{-\alpha}$. Interestingly, a system being NTIC requires $V_t$ to be
supervenient only with respect to $\bX^{\alpha}$ (i.e. $V_t=F(\bX^{\alpha})$),
as well as information flow from the environment to the feature (i.e.
$I(\bX^{-\alpha};V_{t'}) > 0$) mediated by the feature itself, so that
$\bX-V_t-V_{t'}$ is a Markov chain. Hence, NTIC requires features that are
sufficient statistics for their own dynamics, which is akin to our concept of
causal decoupling but focused on the interaction between a macroscopic feature,
an agent, and its environment. Extending our framework to agent-environment
systems involved in active inference is part of our future work.

%%%%%%%%%%%%%%%%%%%%%%%%%%%%%%%%%%%%%%%%%%%%%%%%%%%%%%%%
\subsection{Limitations and open problems}
\label{sec:limitations}

The framework presented in this paper focuses on features from fully observable
systems with Markovian dynamics. These assumptions, however, often do not hold
when dealing with experimental data -- especially in biological and social
systems. As an important extension, future work should investigate the effect
of unobserved variables on our measures. This could be done, for example,
leveraging Taken's embedding
theorem~\cite{takens1981detecting,cliff2016information} or other
methods~\cite{wilting2018inferring}.

An interesting feature of our framework is that, although it depends on the
choice of PID and $\Phi$ID, its practical application via the criteria
discussed in Section~\ref{sec:general_tools} is agnostic to those choices.
However, they incur the cost of a limited sensitivity to detect emergence due
to an overestimation of the microscopic redundancy; so they can detect
emergence when it is substantial, but might miss it in more subtle cases.
Additionally, these criteria are unable to rule out emergence, as they are
sufficient but not necessary conditions. Therefore, another avenue of future
work should search for improved practical criteria for detecting emergence from
data. One interesting line of research is providing scalable approximations for
$\texttt{Syn}_\star^{(k)}$ and $\mathcal{G}_\star^{(k)}$ as introduced in
Section~\ref{sec:quantifying_synergies}, which could be computed in large
systems.

Another open question is how the emergence capacity is affected by changes in
the microscopic partition of the system (c.f. Section~\ref{sec:defining}).
Interesting applications of this includes scenarios where elements of interest
have been subject to a mixing process, such as the case of
electroencephalography where each electrode detects a mixture of brain sources.
Other interesting questions include studying systems with non-zero emergence
capacity for all reasonable microscopic partitions, which may correspond to a
stronger type of emergence.

%%%%%%%%%%%%%%%%%%%%%%%%%%%%%%%%%%%%%%%%%%%%%%%%%%%%%%%%
%%%%%%%%%%%%%%%%%%%%%%%%%%%%%%%%%%%%%%%%%%%%%%%%%%%%%%%%
\section{Conclusion}

This paper introduces a quantitative definition of causal emergence, which
addresses the apparent paradox of supervenient macroscopic features with
irreducible causal power using principles of multivariate statistics. We
provide a formal, quantitative theory that embodies many of the
principles attributed to strong emergence, while being measurable and
compatible with the established scientific worldview. Perhaps the most
important contribution of this work is to bring the discussion of emergence
closer to the realm of quantitative, empirical scientific investigation,
complementing the ongoing philosophical inquiries on the subject.

Mathematically, the theory is based on the Partial Information Decomposition
(PID) framework~\cite{williams2010nonnegative}, and on a recent extension,
Integrated Information Decomposition ($\Phi$ID)~\cite{mediano2019beyond}. The
theory allows the derivation of sufficiency criteria for the detection of
emergence that are scalable, easy to compute from data, and based only on
Shannon's mutual information. We illustrated the use of these practical
criteria in three case studies, and concluded that: i) particle collisions are
an emergent feature in Conway's Game of Life, ii) flock dynamics are an
emergent feature of simulated birds; and iii) the representation of motor
behaviour in the cortex is emergent from neural activity. Our theory, together
with these practical criteria, enables novel data-driven tools for
scientifically addressing conjectures about emergence in a wide range of
systems of interest.

Our original aim in developing this theory, beyond the contribution to
complexity theory, is to help bridge the gap between the mental and the
physical, and ultimately understand how mind emerges from matter. This paper
provides formal principles to explore the idea that psychological phenomena
could emerge from collective neural patterns, and interact with each other
dynamically in a causally decoupled fashion -- perhaps akin to the
``statistical ghosts'' mentioned in Section~\ref{sec:decoupling}. Put simply:
just as particles in the Game of Life have their own collision rules, we wonder
if thought patterns could have their own emergent dynamical laws, operating at
a larger scale with respect to their underlying neural
substrate~\footnote{Similar ideas have been recently explored by
Kent~\cite{kent2019toy}.}. Importantly, the theory presented in this paper not
only provides conceptual tools to frame this conjecture rigorously, but also
provides practical tools to test it from data. The exploration of this
conjecture is left as an exciting avenue for future research.

%%%%%%%%%%%%%%%%%%%%%%%%%%%%%%%%%%%%%%%%%%%%%%%%%%%%%%%%
%%%%%%%%%%%%%%%%%%%%%%%%%%%%%%%%%%%%%%%%%%%%%%%%%%%%%%%%
\section*{Acknowledgements}

The authors thank Shamil Chandaria, Matthew Crosby, Martin Biehl and Acer Chang
for insightful discussions, and the creators of the Neurotycho database for
opening to the public such a valuable resource. P.M. and D.B. are funded by the
Wellcome Trust (grant no. 210920/Z/18/Z). F.R. is supported by the Ad Astra
Chandaria foundation. A.K.S. and A.B.B. are grateful to the Dr. Mortimer and
Theresa Sackler Foundation, which supports the Sackler Centre for Consciousness
Science.

\appendix

%%%%%%%%%%%%%%%%%%%%%%%%%%%%%%%%%%%%%%%%%%%%%%%%%%%%%%%%
%%%%%%%%%%%%%%%%%%%%%%%%%%%%%%%%%%%%%%%%%%%%%%%%%%%%%%%%
\section{Information decomposition in large multivariate systems}
\label{app:pid}

\begin{figure*}[t]
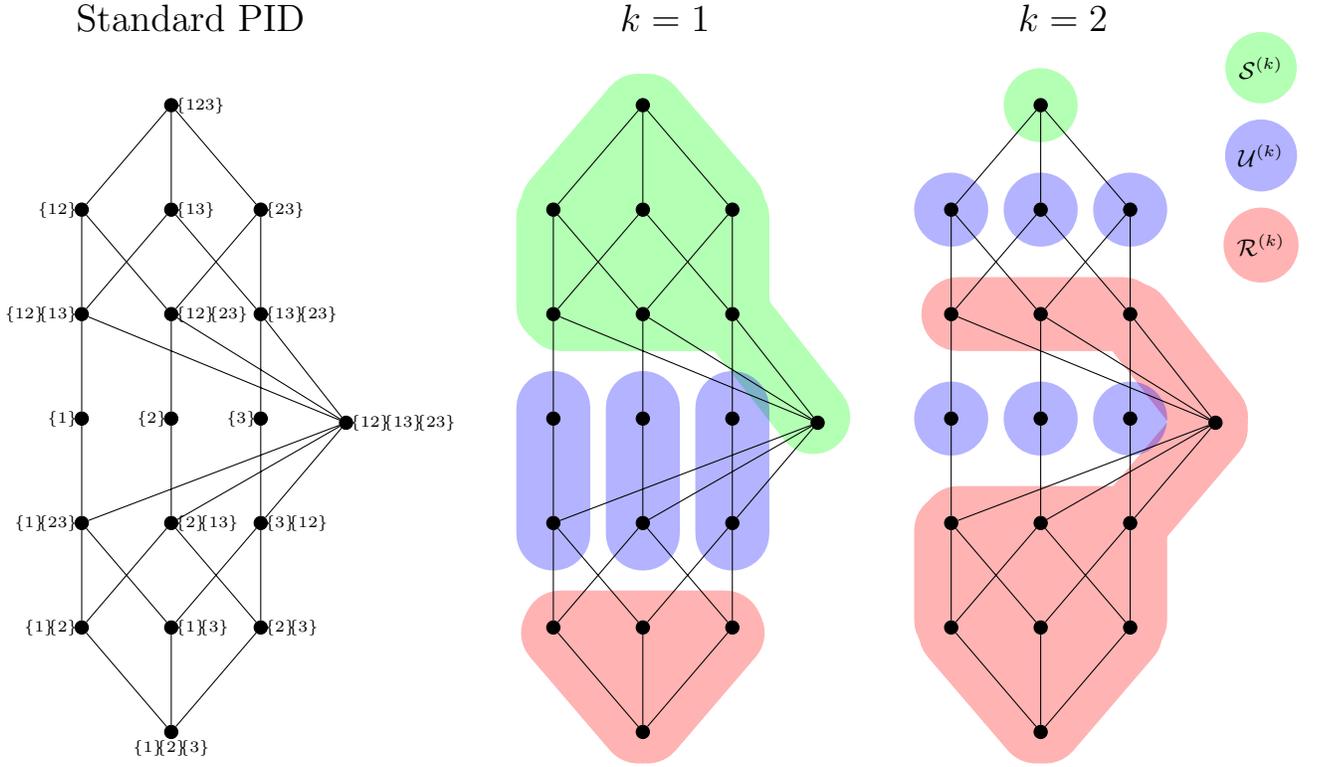

  \centering
  \includetikz{tikz/}{OrderKLattice}

  \caption{\textbf{Coarse-grained partial information decomposition of order
  $k$}. (\emph{left}) Standard PID lattice for $n=3$, shown for reference.
Node labels are omitted from the other lattices for clarity. \emph{(middle)}
Coarse-graining for $k=1$, superimposed on the PID lattice. (\emph{right})
Coarse-graining for $k=2$. For both values of $k$, Lemma~\ref{lemma:PIDk}
guarantees that the $k$\textsuperscript{th}-order atoms provide an exact
decomposition of mutual information.}

  \label{fig:PIDk}
\end{figure*}

To formulate our theory of causal emergence for arbitrary order $k$, in
Sec.~\ref{sec:pid} we introduced definitions of $k$\textsuperscript{th}-order
synergy and unique information. In this appendix we complement these with a
matching definition of $k$\textsuperscript{th}-order redundancy, and show that
these provide a full-fledged information decomposition for any $k =
\{1,\dots,n-1\}$. For completeness, we present all definitions and examples
here -- including those that were necessary for the exposition of the main text
and were previously presented in Sec.~\ref{sec:pid}.

We begin by (re-)introducing the notion of $k$\textsuperscript{th}-order
synergy between $n$ variables, defined as
\begin{equation}
  \textnormal{\texttt{Syn}}^{(k)}(\bm X ; Y)\coloneqq \sum_{\bm\alpha\in\mathcal{S}^{(k)}} I_\partial^{\bm\alpha}(\bm X ; Y)~,\nonumber
\end{equation}
with $\mathcal{S}^{(k)} = \{\{\alpha_1,\dots,\alpha_L\} \in\mathcal{A} :
|\alpha_j| > k, \forall j=1,\dots,L \}$. Intuitively, $\texttt{Syn}^{(k)}(\bm
X; Y)$ corresponds to the information about the target that is provided by the
whole $\bm X$ but is not contained in any set of $k$ or less parts when
considered separately from the rest. Accordingly, $\mathcal{S}^{(k)}$ only
contains collections of more than $k$ sources. For example, for $n=2$ we obtain
the standard synergy $\mathcal{S}^{(1)} = \{ \{12\} \}$, and for $n=3$ we have
$\splitatcommas{\mathcal{S}^{(1)} = \{ \{12\}, \{13\}, \{23\}, \{12\}\{13\},
\{12\}\{23\}, \{13\}\{23\}, \{12\}\{13\}\{23\}, \{123\}\}}$.

Similarly, the $k$\textsuperscript{th}-order unique information of $\bm
X^\beta$ with $\beta \subset [n]$ is calculated as
\begin{equation}
  \textnormal{\texttt{Un}}^{(k)}(\bm X^{\beta} ; Y | \bm X^{-\beta} ) \coloneqq \sum_{\bm \alpha \in \mathcal{U}^{(k)}(\beta)} I_\partial^{ \bm\alpha}(\bm X ; Y)~, \nonumber
\end{equation}
with $\mathcal{U}^{(k)}(\beta) = \{ \bm \alpha\in \mathcal{A} : \beta \in
\bm\alpha, \forall \alpha\in\bm\alpha \setminus \beta, \alpha\subseteq
[n]\setminus\beta, |\alpha| > k\}$, and $\bm X^{-\beta}$ being all the
variables in $\bm X$ the indices of which are not in $\beta$. This corresponds
to all the atoms where $\beta$ is the only source of size $k$ or less -- which,
importantly, is in general not just $I_\partial^{\beta}$. Intuitively, this is
the information that $\bm X^{\beta}$ has access to and no other subset of
parts has access to \emph{on its own} (although bigger groups of other
parts may). And again, for $n=2$ we recover $\mathcal{U}^{(1)}(\{i\}) = \{
\{i\} \}$, and for $n=3$ we have e.g. $\mathcal{U}^{(1)}(\{1\}) = \{ \{1\},
\{1\}\{23\} \}$.

Finally, the $k$\textsuperscript{th}-order redundancy is given by
\begin{equation}
  \textnormal{\texttt{Red}}^{(k)}(\bm X ; Y ) \coloneqq  \sum_{\bm \alpha \in \mathcal{R}^{(k)}} I_\partial^{ \bm\alpha}(\bm X ; Y)~, \nonumber
\end{equation}
with $\mathcal{R}^{(k)} = \{ \bm\alpha\in \mathcal{A} : \exists i\neq j,
|\alpha_i|, |\alpha_j| \leq k \}$. Intuitively, $\texttt{Red}^{(k)}(\bm X ; Y )$
is the information that is held by at least two different groups of size $k$
or less. Again, in the $n=2$ case we recover the standard redundancy
$\splitatcommas{\mathcal{R}^{(1)} = \{ \{1\}\{2\} \}}$; and as an example for
$n=3$ we have $\splitatcommas{\mathcal{R}^{(1)} = \{ \{1\}\{2\}, \{1\}\{3\},
\{2\}\{3\}, \{1\}\{2\}\{3\} \}}$.

With the definitions above, we can build a coarse-grained PID which generalises
the well-known construction for $n=2$. This allows us to formulate
decompositions with a small number of atoms that scale gracefully with system
size, and, more interestingly, preserve the intuitive meaning that synergy,
redundancy, and unique information have for $n=2$.

\begin{lemma}\label{lemma:PIDk}
The $k$\textsuperscript{th}-order synergy, redundancy, and unique information
defined above provide an exact decomposition of mutual information:
\begin{align}
I(\bm X;Y) =&~ \textnormal{\texttt{Red}}^{(k)}(\bm X;Y) + \textnormal{\texttt{Syn}}^{(k)}(\bm X;Y)  \nonumber \\
&+ \sum_{\substack{\beta \subset [n]: \\|\beta|\leq k}} \textnormal{\texttt{Un}}^{(k)}(\bm X^\beta;Y|\bm X^{-\beta})~.
\end{align}
\end{lemma}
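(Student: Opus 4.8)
The plan is to show that the three families of atoms $\mathcal{R}^{(k)}$, $\mathcal{S}^{(k)}$, and $\{\mathcal{U}^{(k)}(\beta)\}_{|\beta|\leq k}$ form a \emph{partition} of the full antichain lattice $\mathcal{A}$; once this is established, the claimed identity follows immediately by summing the defining equation $I(\bm X;Y)=\sum_{\bm\alpha\in\mathcal{A}}I_\partial^{\bm\alpha}(\bm X;Y)$ over the blocks of the partition. Thus the entire content of the lemma is combinatorial, and I would reduce it to two assertions: (a) the sets $\mathcal{R}^{(k)}$, $\mathcal{S}^{(k)}$, and the $\mathcal{U}^{(k)}(\beta)$ are pairwise disjoint, and (b) their union is all of $\mathcal{A}$.

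For step (a), I would classify an arbitrary antichain $\bm\alpha=\{\alpha_1,\dots,\alpha_L\}\in\mathcal{A}$ by the number $m(\bm\alpha)\coloneqq\#\{j:|\alpha_j|\leq k\}$ of its ``small'' components (those of size at most $k$). The key observation is a trichotomy: $m(\bm\alpha)\geq 2$ iff $\bm\alpha\in\mathcal{R}^{(k)}$; $m(\bm\alpha)=0$ iff $\bm\alpha\in\mathcal{S}^{(k)}$; and $m(\bm\alpha)=1$, say with the unique small component equal to $\beta$, iff $\bm\alpha\in\mathcal{U}^{(k)}(\beta)$. The first two equivalences are immediate from the definitions of $\mathcal{R}^{(k)}$ and $\mathcal{S}^{(k)}$. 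For the third, I need to check that the condition ``$\beta\in\bm\alpha$ and every other $\alpha\in\bm\alpha$ satisfies $\alpha\subseteq[n]\setminus\beta$ and $|\alpha|>k$'' is equivalent to ``$\beta$ is the unique component of $\bm\alpha$ of size $\leq k$''; the size condition is explicit, and the containment $\alpha\subseteq[n]\setminus\beta$ is automatic from the antichain property (if $\alpha\cap\beta\neq\emptyset$ with $|\alpha|>k\geq|\beta|$, comparability issues would force $\beta\subsetneq\alpha$, contradicting the antichain condition — more carefully, an antichain cannot contain two comparable sets, and one checks $\alpha\not\subseteq\beta$ since $|\alpha|>|\beta|$, so the only way to avoid $\beta\subsetneq\alpha$ is $\alpha\not\supseteq\beta$, but then $\alpha\setminus\beta\neq\emptyset$; the cleanest route is simply to observe that in the PID lattice the condition defining $\mathcal{U}^{(k)}(\beta)$ picks out exactly those antichains whose only minimal-in-size element is $\beta$). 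Disjointness then follows because $m(\bm\alpha)$ takes a single value for each $\bm\alpha$, and when $m=1$ the small component $\beta$ is uniquely determined, so the $\mathcal{U}^{(k)}(\beta)$ are mutually disjoint as well.

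For step (b), I would note that the trichotomy above is exhaustive: every $\bm\alpha\in\mathcal{A}$ has some value $m(\bm\alpha)\in\{0,1,2,\dots\}$, and each case lands in exactly one of the three families (with the case $m=1$ further indexed by $\beta$, which necessarily satisfies $|\beta|\leq k$). Hence $\mathcal{A}=\mathcal{R}^{(k)}\sqcup\mathcal{S}^{(k)}\sqcup\bigsqcup_{|\beta|\leq k}\mathcal{U}^{(k)}(\beta)$, and summing $I_\partial^{\bm\alpha}$ over this disjoint union gives the decomposition. I expect the main (though still modest) obstacle to be a careful justification that, for an antichain $\bm\alpha$ with exactly one small component $\beta$, all the remaining components are automatically disjoint from $\beta$ — i.e. that the definition of $\mathcal{U}^{(k)}(\beta)$ in Appendix~\ref{app:pid} (which requires $\alpha\subseteq[n]\setminus\beta$) does not exclude any antichain that the ``$m(\bm\alpha)=1$'' bookkeeping would include. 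This is where I would be most careful to align the combinatorial definition with the intended partition; everything else is immediate from additivity of the $I_\partial^{\bm\alpha}$ over disjoint index sets. It may also be worth recording the sanity checks already noted in the text: for $n=2$ the partition reduces to $\mathcal{A}=\{\{1\}\{2\}\}\sqcup\{\{12\}\}\sqcup\{\{1\}\}\sqcup\{\{2\}\}$, recovering the classical Williams--Beer decomposition, and for $n=3$, $k=1$ one can verify by inspection that the listed sets $\mathcal{R}^{(1)}$, $\mathcal{S}^{(1)}$, $\mathcal{U}^{(1)}(\{1\})$, $\mathcal{U}^{(1)}(\{2\})$, $\mathcal{U}^{(1)}(\{3\})$ exhaust the $18$-element antichain lattice without overlap.
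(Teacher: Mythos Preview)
Your approach is exactly the paper's: show that $\mathcal{R}^{(k)}$, $\mathcal{S}^{(k)}$, and the $\mathcal{U}^{(k)}(\beta)$ partition $\mathcal{A}$, then sum the PID atoms over the blocks. Your organizing device $m(\bm\alpha)$ (the number of components of size at most $k$) is a clean way to phrase the same three-way case split the paper uses.

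However, the point you flag as ``the main (though still modest) obstacle'' is a genuine gap, and your attempted resolution is wrong. You claim that for an antichain $\bm\alpha$ with a unique small component $\beta$, the remaining components are automatically disjoint from $\beta$ by the antichain property. This holds only when $|\beta|=1$: for a singleton $\beta=\{j\}$, the antichain condition $\beta\not\subset\alpha$ forces $j\notin\alpha$, hence $\alpha\subseteq[n]\setminus\beta$. For $k\geq 2$ it fails. Take $n\geq 4$, $k=2$, and $\bm\alpha=\{\{1,2\},\{1,3,4\}\}$: this is a valid antichain with $m(\bm\alpha)=1$ and unique small component $\beta=\{1,2\}$, yet $\{1,3,4\}\not\subseteq[n]\setminus\{1,2\}$, so $\bm\alpha\notin\mathcal{U}^{(2)}(\{1,2\})$ under the stated definition; and it is clearly not in $\mathcal{R}^{(2)}$ or $\mathcal{S}^{(2)}$ either. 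Your parenthetical reasoning (``comparability issues would force $\beta\subsetneq\alpha$'') conflates nonempty intersection with containment. The paper's own proof glosses over the same point, simply asserting ``if there is exactly one $\alpha\in\bm\alpha$ with $|\alpha|\leq k$, then $\bm\alpha\in\mathcal{U}^{(k)}(\alpha)$'' without verifying the disjointness clause. The clean fix is to drop the requirement $\alpha\subseteq[n]\setminus\beta$ from the definition of $\mathcal{U}^{(k)}(\beta)$ (it is redundant when $k=1$ anyway); with that amendment your trichotomy on $m(\bm\alpha)$ yields the partition immediately and the lemma follows.
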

\begin{proof}

We will prove this by showing that the sets $\mathcal{R}^{(k)}$,
$\mathcal{S}^{(k)}$ and $\mathcal{U}^{(k)}(\beta)$ are a partition of
$\mathcal{A}$. We will do this in two steps: first, we show that their
intersection is empty; and second, that their union is $\mathcal{A}$.

Let us show that the intersections between every pair of sets is empty:
\begin{itemize}

  \item $\Rk \bigcap \Sk = \emptyset$, since if $\bm\alpha \in \Rk$ it must
  contain at least one $\alpha \in \bm\alpha : |\alpha| \leq k$, and therefore
  $\bm\alpha \notin \Sk$.

  \item $\Uk(\gamma) \bigcap \Uk(\beta) = \emptyset$ if and only if $\gamma
  \neq \beta$, since every $\bm\alpha \in \Uk(\gamma)$ has either no other
  elements apart from $\gamma$ (in which case $\beta \notin \bm\alpha$ and thus
  $\bm\alpha \notin \Uk(\beta)$), or other elements of cardinality greater than
  $k$ (in which case, again, $\beta \notin \bm\alpha$ and thus $\bm\alpha
  \notin \Uk(\beta)$).

  \item $\Sk \bigcap \Uk(\beta) = \emptyset$ for all $|\beta| \leq k$, since
  every $\bm\alpha \in \Uk(\beta)$ contains at least one element with
  cardinality less than or equal to $k$ (specifically, $\beta$), and therefore
  $\bm\alpha \notin \Sk$.

  \item $\Rk \bigcap \Uk(\beta) = \emptyset$ for all $|\alpha| \leq k$, since
  every $\bm\alpha \in \Rk$ contains at least two sets with cardinality less
  than or equal to $k$, while by the definition of $\Uk(\beta)$ every element
  other than $\beta$ must have cardinality greater than $k$, and thus
  $\bm\alpha \notin \Uk(\beta)$.

\end{itemize}

This concludes the first part of the proof. Next, we need to prove that the
union of those sets is indeed $\mathcal{A}$. We will show this by proving that
every $\bm\alpha \in \mathcal{A}$ is in one of those sets:

\begin{itemize}

  \item If $\nexists \alpha \in \bm\alpha : |\alpha| \leq k$, then $\bm\alpha
  \in \Sk$.

  \item If there is exactly one $\alpha \in \bm\alpha : |\alpha| \leq k$, then
  $\bm\alpha \in \Uk(\alpha)$.

  \item If there are more than one $\alpha \in \bm\alpha : |\alpha| \leq k$,
  then $\bm\alpha \in \Rk$.

\end{itemize}

\end{proof}

The two possible decompositions for $n=3$, together with the standard PID
lattice, are shown in Figure~\ref{fig:PIDk}.

%%%%%%%%%%%%%%%%%%%%%%%%%%%%%%%%%%%%%%%%%%%%%%%%%%%%%%%%
%%%%%%%%%%%%%%%%%%%%%%%%%%%%%%%%%%%%%%%%%%%%%%%%%%%%%%%%
\section{Properties of high-order PI atoms}
\label{app:highorderPID}

As discussed in Section~\ref{sec:formal_theory}, our theory does not depend on
a specific functional form of PID. Instead, it applies to any PID that
satisfies the following properties:

\begin{itemize}

  \item \emph{Deterministic equality}: if there exists a function $f(\cdot)$ 
    such that $X^i = f(X^j)$ with $i\neq j$, then the information decomposition of $\bm X$ 
    is isomorphic to that of $\bm X^{-j}$ (see below).

  \item \emph{Non-negativity}: $\textnormal{\texttt{Syn}}^{(k)}( \bm X; Y) \geq
  0$, and $\min\{I(\bm Z;Y), I(\bm Z;Y | \bm X) \} \geq \textnormal{\texttt{Un}}^{(k)}(
  \bm Z; Y | \bm X) \geq 0$.

  \item \textit{Source data processing inequality}: $\textnormal{\texttt{Un}}^{(k)}( \bm W; Y | \bm X) \leq
  \textnormal{\texttt{Un}}^{(k)}( \bm Z; Y | \bm X)$ for all $\bm W-\bm Z-(\bm X,Y)$
  Markov chains.

\end{itemize}

To formulate the causal decoupling and downward causation indices in
Section~\ref{sec:taxonomy}, we make use of $\Phi$ID, a recent extension of PID
to multi-target settings~\cite{mediano2019beyond}. As with PID, our theory does
not require a particular functional form of $\Phi$ID, only the following
property:

\begin{itemize}
  \item $\sum_{|\alpha|=k}\texttt{Syn}^{(k)}(\bX; \bm X_{t'}^{\alpha}) \geq \mathcal{D}^{(k)}(\bX;\bY) \geq \texttt{Syn}^{(k)}(\bX; \bm X_{t'}^{\beta})$ for all $|\beta|=k$.
\end{itemize}

Finally, these properties are required to formulate the practical criteria for
emergence in Section~\ref{sec:general_tools}:

\begin{itemize}

  \item \emph{Whole-minus-sum}: $\textnormal{\texttt{Syn}}^{(k)}( \bm X; Y)
  \geq I(\bm X;Y) - \sum_{|\alpha|=k} I(\bm X^{\alpha} ;Y)$.

  \item \emph{Target data processing inequality}: for all $\bm X-Y-U$ Markov
  chains, $\textnormal{\texttt{Syn}}^{(k)}( \bm X; U) \leq
  \textnormal{\texttt{Syn}}^{(k)}( \bm X; Y)$.

\end{itemize}

For completeness, we present a precise definition of the deterministic equality
property, as previously introduced in the PID
literature~\cite{kolchinsky2019novel}.

\begin{definition}
A PID satisfies \emph{deterministic equality} if 
$I_\partial^{\bm\alpha}(\bm X ; Y) = I_\partial^{g_j(\bm\alpha)}(\bm X^{-j} ; Y)$ for all 
$\bm\alpha\in\mathcal{A}$ 
whenever there exists a function $f(\cdot)$ such that $f(X^j) = X^i$ with $i\neq j$, 
with $g_j(\bm\alpha)$ removes $j$ from all the sets of indices in $\bm\alpha$.
\end{definition}

It is direct to check that a number of well-known information decompositions,
including the Minimum Mutual Information PID~\cite{barrett2015exploration}, and
the corresponding $\Phi$ID~\cite{mediano2019beyond}, satisfy these
requirements.

With these properties at hand we can prove the following results, used in
Sections~\ref{sec:defining} and~\ref{sec:taxonomy}:

\begin{lemma}\label{lemma:unsyn}
If $X^{n+1} = \bm X $, then the following holds:
\begin{equation}
  \textnormal{\texttt{Syn}}^{(k)}(\bm X;Y)
  = \textnormal{\texttt{Un}}^{(k)}( X^{n+1} ;Y | \bm X)~.
\end{equation}
Above, the second term corresponds to a PID over a system of $n+1$ elements.
\end{lemma}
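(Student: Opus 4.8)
The plan is to reduce the identity to a statement about individual PID atoms and then transfer those atoms between the $n$-source and the $(n+1)$-source redundancy lattices. Writing out both sides: $\texttt{Syn}^{(k)}(\bm X;Y)=\sum_{\bm\gamma\in\mathcal{S}^{(k)}}I_\partial^{\bm\gamma}(\bm X;Y)$ ranges over antichains of $[n]$ all of whose blocks have size $>k$, while $\texttt{Un}^{(k)}(X^{n+1};Y\,|\,\bm X)=\sum_{\bm\alpha\in\mathcal{U}^{(k)}(\{n+1\})}I_\partial^{\bm\alpha}(\bm X^{[n+1]};Y)$ ranges over antichains of $[n+1]$ that contain the block $\{n+1\}$ and whose remaining blocks lie in $[n]$ and have size $>k$. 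The first observation is that $\bm\gamma\mapsto\bm\gamma\cup\{\{n+1\}\}$ is a bijection from $\mathcal{S}^{(k)}$ onto $\mathcal{U}^{(k)}(\{n+1\})$ with the single exception of the antichain $\{\{n+1\}\}$ itself (its inverse just deletes the block $\{n+1\}$). Hence the lemma will follow once I establish (i) $I_\partial^{\{\{n+1\}\}\cup\bm\gamma}(\bm X^{[n+1]};Y)=I_\partial^{\bm\gamma}(\bm X;Y)$ for every $\bm\gamma\in\mathcal{S}^{(k)}$, and (ii) $I_\partial^{\{\{n+1\}\}}(\bm X^{[n+1]};Y)=0$.

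For the transfer I would use the structural fact that appending $X^{n+1}=\bm X$ embeds the whole $n$-source antichain lattice $\mathcal{A}$ into the $(n+1)$-source lattice via $\bm\delta\mapsto\{\{n+1\}\}\cup\bm\delta$, and that this embedding is down-closed below the node $\{\{n+1\}\}$: to lie below $\{\{n+1\}\}\cup\bm\delta$ in the lattice an antichain must itself contain the block $\{n+1\}$ (that is the only sub-block of $\{n+1\}$ it can use), hence is again of the form $\{\{n+1\}\}\cup\bm\delta'$ with $\bm\delta'\preceq\bm\delta$ in $\mathcal{A}$. Moreover, since $X^{n+1}$ subsumes every $\bm X^{\delta}$, the deterministic-equality property forces the cumulative redundancies to coincide, $I_\cap^{\{\{n+1\}\}\cup\bm\delta}(\bm X^{[n+1]};Y)=I_\cap^{\bm\delta}(\bm X;Y)$ for all $\bm\delta\in\mathcal{A}$. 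Inducting over the lattice (each atom equals its cumulative redundancy minus the atoms strictly below it, and the down-sets match under the embedding) then yields claim (i) for every $\bm\delta$, in particular every $\bm\gamma\in\mathcal{S}^{(k)}$. Finally, summing (i) over all of $\mathcal{A}$ and subtracting from $I(X^{n+1};Y)=I(\bm X^{[n+1]};Y)=I(\bm X;Y)=\sum_{\bm\delta\in\mathcal{A}}I_\partial^{\bm\delta}(\bm X;Y)$ gives (ii); combining (i), (ii) and the bijection closes the argument.

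I expect the crux to be the redundancy-matching step $I_\cap^{\{\{n+1\}\}\cup\bm\delta}(\bm X^{[n+1]};Y)=I_\cap^{\bm\delta}(\bm X;Y)$. The deterministic-equality property in Appendix~\ref{app:highorderPID} is phrased for a source that coincides with a function of a \emph{single} other source, whereas $X^{n+1}$ is a function of the whole tuple, so one has to argue the stronger fact that adjoining the joint source leaves unchanged all $I_\cap$ values of antichains that include it (e.g. via monotonicity of $I_\cap$ together with the observation that $X^{n+1}$ never blocks information shared by a sub-collection), and hence leaves unchanged every atom containing the block $\{n+1\}$. The boundary case $k=n-1$ (where $\mathcal{S}^{(k)}=\{\{[n]\}\}$ and the image $\{[n],\{n+1\}\}$ must be handled) should also be checked, but is routine once the matching fact is in hand.
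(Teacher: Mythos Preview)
Your proposal is correct and follows essentially the same route as the paper: embed $\mathcal{A}^n$ into $\mathcal{A}^{n+1}$ via $\bm\delta\mapsto\bm\delta\cup\{\{n+1\}\}$, use deterministic equality to match the cumulative redundancies $I_\cap$, pass to atoms by M\"obius inversion, and show the leftover atom $I_\partial^{\{\{n+1\}\}}$ vanishes (the paper does this last step by noting that $\{[n]\}\{n+1\}$ is the unique immediate predecessor of $\{n+1\}$ with matching $I_\cap$, whereas you obtain it by a global summation---both are fine). The worry you flag about the scope of deterministic equality is legitimate, but the paper's own proof invokes the property in exactly the same way (``by the deterministic equality property it is direct to check that $I_\cap^{f(\bm\alpha)}=I_\cap^{\bm\alpha}$'') without spelling out the extension to a source that duplicates the full tuple, so your argument is at the same level of rigour as the original.
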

\begin{proof}
Begin by considering the PID of $n$ sources $X^1, ... , X^n$ on the lattice
$\mathcal{A}^n$, and define its set $\mathcal{S}^{(k)}$ as above. Now we add an
additional $n+1^{\mathrm{st}}$ variable that is simply all of them
concatenated, $X_{n+1} = \bm X^n$, and build a PID on the lattice
$\mathcal{A}^{n+1}$. In the following, we consider the set
$\mathcal{S}^{(k)}$ to belong to the lattice $\mathcal{A}^n$, and the set
$\mathcal{U}^{(k)}(\beta)$ to belong to the lattice $\mathcal{A}^{n+1}$. To
prove the lemma we need four ingredients, which we provide in the four
paragraphs below.

\begin{enumerate}

\item First, note that the nodes in $\mathcal{A}^{n+1}$ that precede $\{n+1\}$
are those in $\mathcal{A}^n$, but with the singleton $\{n+1\}$ appended to
them. More specifically, the mapping $f: \mathcal{A}^n \rightarrow
\mathcal{A}^{n+1}$ of the form $f(\bm\alpha) = \bm\alpha\cup \{\{n+1\}\}$ is
such that for any $\bm\alpha \in \mathcal{A}^n$ then $f(\bm\alpha) \prec
\{\{n+1\}\}$. Additionally, due to the properties of the partial order,
$\bm\alpha \preceq \bm\alpha'$ if and only if $f(\bm\alpha) \preceq
f(\bm\alpha')$. This shows that $\mathcal{A}^n$ is isomorphic to a sublattice
of $\mathcal{A}^{n+1}$.

\item Next, by the \emph{deterministic equality} property it is direct to check
that $I_\cap^{f(\bm\alpha)} = I_\cap^{\bm\alpha\vphantom{f}}$. Since this
equality holds for all $\bm\alpha \in \mathcal{A}^n$, then applying a M\"obius
inversion we directly obtain that $I_\partial^{f(\bm\alpha)} =
I_\partial^{\bm\alpha \vphantom{f}}$.

\item Additionally, by construction of $\mathcal{U}^{(k)}$ and
$\mathcal{S}^{(k)}$, for all $\bm\gamma \in \mathcal{S}^{(k)}$ one has
$f(\bm\gamma) \in \mathcal{U}^{(k)}(\{n+1\})$. In other words, the set
$\mathcal{U}^{(k)}(\{n+1\})$ includes all atoms in $\mathcal{S}^{(k)}$, plus
$\{n+1\}$.

\item Finally, note that the node $\{12...n\}\{n+1\}$ is the only direct
predecessor of $\{n+1\}$, since there exists no node $\bm\beta \in
\mathcal{A}^{n+1}$ such that $\bm\beta \prec \{n+1\}$ and not $\bm\beta \preceq
\{12...n\}\{n+1\}$. By the \emph{deterministic equality} property
$I_\cap^{\{12...n\}\{n+1\}} = I_\cap^{\{12...n\}}$ and, therefore,
$I_\partial^{\{n+1\}} = 0$.

\end{enumerate}

With all of these, it is direct to see that
\begin{align*}
  \texttt{Un}^{(k)}(\bm X^n;Y|\bm X) &= \sum_{\bm \alpha \in \mathcal{U}^{(k)}(\{n+1\})} I_\partial^{\bm\alpha} (\bm X; Y) \\
  &= \sum_{\bm \alpha \in \mathcal{S}^{(k)}} I_\partial^{ \bm\alpha} (\bm X; Y) = \texttt{Syn}^{(k)}(\bm X;Y)~.
\end{align*}
\end{proof}

\begin{corollary}\label{cor:unsyn_phiid}
If $X^{n+1} = \bm X$ and $Y^{n+1} = \bm Y$, then the following holds:
\begin{equation}
  \mathcal{G}^{(k)}(\bm X; \bm Y) = \textnormal{\texttt{Un}}^{(k)}(X^{n+1}; Y^{n+1} | \bm X, \bm Y)
\end{equation}
Above, the second term corresponds to a $\Phi$ID over a system of $n+1$
elements.
\end{corollary}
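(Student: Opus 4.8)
The plan is to prove Corollary~\ref{cor:unsyn_phiid} as the $\Phi$ID counterpart of Lemma~\ref{lemma:unsyn}, by lifting the latter's argument essentially verbatim to the product (source~$\times$~target) lattice $\mathcal{A}^{n+1}\times\mathcal{A}^{n+1}$ on which the $(n+1)$-element $\Phi$ID lives. First I would unpack the right-hand side: applying the PID-level definition of $\texttt{Un}^{(k)}$ separately to the source and to the target factor of the $\Phi$ID, the $(n+1)$-element quantity on the right is $\texttt{Un}^{(k)}(X^{n+1};Y^{n+1}\mid\bm X,\bm Y)=\sum_{\bm\alpha,\bm\beta\in\mathcal{U}^{(k)}(\{n+1\})} I_\partial^{\bm\alpha\to\bm\beta}$, where the first copy of $\mathcal{U}^{(k)}(\{n+1\})$ is taken in the source lattice $\mathcal{A}^{n+1}$, the second in the target lattice $\mathcal{A}^{n+1}$, and the atoms are computed in the $(n+1)$-element decomposition. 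The target is to identify this double sum with $\mathcal{G}^{(k)}(\bm X;\bm Y)=\sum_{\bm\gamma,\bm\delta\in\mathcal{S}^{(k)}}I_\partial^{\bm\gamma\to\bm\delta}$, whose atoms live in the $n$-element decomposition.

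Next I would carry the four ingredients of the proof of Lemma~\ref{lemma:unsyn} through the product embedding $f\times f$, where $f:\mathcal{A}^n\to\mathcal{A}^{n+1}$ is $f(\bm\gamma)=\bm\gamma\cup\{\{n+1\}\}$. Ingredient~(1) makes $f$ an order-embedding, so $f\times f$ realises $\mathcal{A}^n\times\mathcal{A}^n$ as a sublattice of $\mathcal{A}^{n+1}\times\mathcal{A}^{n+1}$ lying below the node $(\{\{n+1\}\},\{\{n+1\}\})$. Ingredient~(2), the \emph{deterministic equality} property, gives $I_\cap^{f(\bm\gamma)}=I_\cap^{\bm\gamma}$ on each factor; applying it to the source copy $X^{n+1}=\bm X$ and to the target copy $Y^{n+1}=\bm Y$, and using that the M\"obius function of the product lattice is the product of the two single-lattice M\"obius functions, I would conclude $I_\partial^{f(\bm\gamma)\to f(\bm\delta)}=I_\partial^{\bm\gamma\to\bm\delta}$ for every $(\bm\gamma,\bm\delta)\in\mathcal{A}^n\times\mathcal{A}^n$ (left side in the $(n+1)$-element decomposition, right side in the $n$-element one). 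Ingredient~(3) gives $\mathcal{U}^{(k)}(\{n+1\})=f(\mathcal{S}^{(k)})\cup\{\{\{n+1\}\}\}$ on each factor, so the index set of the right-hand side splits into the bulk $f(\mathcal{S}^{(k)})\times f(\mathcal{S}^{(k)})$ together with border pairs in which the singleton antichain $\{\{n+1\}\}$ appears on the source side, on the target side, or on both.

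Finally I would kill the border pairs with ingredient~(4): since $\{12\dots n\}\{n+1\}$ is the unique direct predecessor of $\{\{n+1\}\}$ in $\mathcal{A}^{n+1}$, deterministic equality yields $I_\cap^{\{\{n+1\}\}\to\bm\beta}=I_\cap^{\{12\dots n\}\{n+1\}\to\bm\beta}$ for every target antichain $\bm\beta$, and M\"obius inversion then forces $I_\partial^{\{\{n+1\}\}\to\bm\beta}=0$; the symmetric argument on the target lattice gives $I_\partial^{\bm\alpha\to\{\{n+1\}\}}=0$. Hence every border term vanishes and $\texttt{Un}^{(k)}(X^{n+1};Y^{n+1}\mid\bm X,\bm Y)=\sum_{\bm\gamma,\bm\delta\in\mathcal{S}^{(k)}}I_\partial^{f(\bm\gamma)\to f(\bm\delta)}=\sum_{\bm\gamma,\bm\delta\in\mathcal{S}^{(k)}}I_\partial^{\bm\gamma\to\bm\delta}=\mathcal{G}^{(k)}(\bm X;\bm Y)$, which is the claim. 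I expect the genuinely delicate point to be the cumulative-to-partial passage on the product lattice — verifying that the $\Phi$ID M\"obius inversion respects the $f\times f$ embedding and annihilates exactly the border atoms — since the combinatorial bookkeeping about $\mathcal{S}^{(k)}$ and $\mathcal{U}^{(k)}$ is inherited unchanged from Lemma~\ref{lemma:unsyn}; a secondary point worth stating carefully is the identification of $\texttt{Un}^{(k)}$ in the $\Phi$ID setting with the product index set $\mathcal{U}^{(k)}(\{n+1\})\times\mathcal{U}^{(k)}(\{n+1\})$.
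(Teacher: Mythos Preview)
Your proposal is correct and follows essentially the same approach as the paper: the paper's proof simply says to formulate the $\Phi$ID decomposition on the product lattice and ``apply the proof [of Lemma~\ref{lemma:unsyn}] to both $\bm\alpha$ and $\bm\beta$,'' noting that this requires a multi-target extension of the deterministic equality property on each factor --- which is precisely your ingredient~(2) applied to source and target copies. Your write-up is considerably more explicit than the paper's (in particular about the product M\"obius function, the border terms, and the identification of the $\Phi$ID $\texttt{Un}^{(k)}$ with the product index set $\mathcal{U}^{(k)}(\{n+1\})\times\mathcal{U}^{(k)}(\{n+1\})$), but the underlying argument is the same.
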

\begin{proof}
Follows from a direct $\Phi$ID extension to the proof of
Lemma~\ref{lemma:unsyn}, by formulating a decomposition
\begin{align*}
  I(\bm X; \bm Y) = \sum_{\bm\alpha,\bm\beta \in \mathcal{A}^n} I_\partial^{\bm\alpha \rightarrow \bm\beta} (\bm X; \bm Y) ~ ,
\end{align*}
and applying the proof above to both $\bm\alpha$ and $\bm\beta$. Strictly
speaking, this also requires a natural multi-target extension of the
Deterministic Equality property, namely that $I_\cap^{\bm\alpha \rightarrow
\bm\beta}(\bm X; \bm Y) = I_\cap^{\bm\alpha \rightarrow \bm\beta}(\bm X^{-j};
\bm Y)$ if $X^i = f(X^j)$ with $j \neq i$, for any $\bm\beta \in
\mathcal{A}^n$; as well as the symmetric $I_\cap^{\bm\alpha \rightarrow
\bm\beta}(\bm X; \bm Y) = I_\cap^{\bm\alpha \rightarrow \bm\beta}(\bm X; \bm
Y^{-j})$ if $Y^i = f(Y^j)$ with $j \neq i$, for any $\bm\alpha \in
\mathcal{A}^n$.
\end{proof}

%%%%%%%%%%%%%%%%%%%%%%%%%%%%%%%%%%%%%%%%%%%%%%%%%%%%%%%%
%%%%%%%%%%%%%%%%%%%%%%%%%%%%%%%%%%%%%%%%%%%%%%%%%%%%%%%%
\section{Mathematical properties of causal emergence}
\label{app:proofs}

\begin{proof}[Proof of Lemma~\ref{lemma:basic_props}]
The first property can be easily proven by noting that there can be no synergy
in a univariate system: i.e. if $n=1$, then $\mathcal{S}^{(k)} = \emptyset$ and
therefore $\texttt{Syn}^{(k)}(\bX; \bY) = 0$.

To prove the second property, let us assume that there exists a function
$g(\cdot)$ such that $g(X_t^j) = V_t$ for some $j$. Then, one can show that
\begin{align}
\texttt{Un}^{(1)}(V_t;\bm X_{t'}|\bX) &\leq I(V_t;\bm X_{t'}|\bX) \label{eq:no_g_1}\\
&\leq I(V_t;\bm X_{t'}|X_t^j) \\
&\leq H(V_t|X_t^j) \\
&=0~,
\end{align}
where \eqref{eq:no_g_1} is due to the non-negativity property of
$\texttt{Un}^{(k)}$ introduced in Section~\ref{sec:pid}. This shows that $V_t$
cannot exhibit emergent behaviour.

\end{proof}

\begin{proof}[Proof of Theorem~\ref{prop:emergence_syn}]
If $\texttt{Syn}^{(k)}(\bX; \bm X_{t'}) > 0$, then by Lemma~\ref{lemma:unsyn} it is
clear that the feature $V_t = \bm X_t$ exhibits causal emergence.

To prove the converse, note that all supervenient features follow the Markov
chain structure $V_t-\bX - \bm X_{t'}$. Therefore, for any supervenient feature
$V_t = f(\bm X_t)$ the following holds:
\begin{align}
0 \leq \texttt{Un}^{(k)}(V_t; \bm X_{t'} | \bm X_t ) 
&\leq 
\texttt{Un}^{(k)}(\bm X_t; \bm X_{t'} | \bm X_t ) \label{eq:dpi_uni33} \\
&=
\texttt{Syn}(\bm X_t; \bm X_{t'})~, \label{eq:so_so_cool}
\end{align}
where \eqref{eq:dpi_uni33} is due to the data processing inequality of the
unique information (c.f. Section~\ref{sec:pid}), and \eqref{eq:so_so_cool} is
due to Lemma~\ref{lemma:unsyn}. Using this result, is clear that if
$\texttt{Syn}^{(k)}(\bX; \bm X_{t'}) = 0$ then $\texttt{Un}^{(k)}(V_t; \bm
X_{t'} | \bm X_t) = 0$ for any superventient feature $V_t$.
\end{proof}

The above proof implies that the system has emergent behaviour if and only if
the system as a whole seen as a feature (i.e. $V_t=\bX$) is causally emergent.
Note that the fact that this trivial feature is helpful for detecting the
presence of emergence doesn't imply that it is an appropriate way of
representing it, as in most cases also carries non-interesting information, and
in practice one may be interested in features that exhibit emergence but are
shorter to describe than the microstate of the system, i.e. $H(V_t) < H(\bX)$.

\begin{proof}[Proof of Theorem~\ref{prop:downward}]
Let us first assume that there exists a supervenient feature $V_t$ such that
$\texttt{Un}^{(k)}(V_t;\bm X_{t'}^{\alpha} | \bm X) >0$ for some $\alpha :
|\alpha| = k$. Then, one can find that
\begin{align}
  0 & < \texttt{Un}^{(k)}(V_t; \bm X_{t'}^{\alpha} | \bX ) \nonumber\\
  &\leq \texttt{Un}^{(k)}(\bX; \bm X_{t'}^{\alpha} | \bX ) \label{eq:un2}\\
  &= \texttt{Syn}^{(k)}(\bX; \bm X_{t'}^{\alpha}) \label{eq:un3} \\
  &\leq \mathcal{D}^{(k)} (\bX; \bm X_{t'})~. \label{eq:un4}
\end{align}
Above, \eqref{eq:un2} is a consequence of the data processing inequality of the
unique information, \eqref{eq:un3} comes from Lemma~\ref{lemma:unsyn}, and
\eqref{eq:un4} is from the properties of $\mathcal{D}^{(k)}$ stated in
Section~\ref{sec:taxonomy}.

To prove the converse, let us assume that all supervenient features $V_t$
satisfy $\texttt{Un}^{(k)}(V_t;\bm X_{t'}^{\alpha} | \bm X_t) =0$ for all $k$
and $|\alpha| = k$. In particular, this is true for the feature $V_t=\bX$.
Then, another application of Lemma~\ref{lemma:unsyn} shows that
\begin{align}
  \mathcal{D}^{(k)} (\bX; \bm X_{t'})
  &\leq \sum_{|\alpha|=k}\texttt{Syn}^{(k)}(\bX; \bm X_{t'}^{\alpha}) \label{eq:un5}\\
  &= \sum_{|\alpha|=k} \texttt{Un}^{(k)}(\bX; \bm X_{t'}^{\alpha} | \bX ) \nonumber \\
  & = 0~. \nonumber
\end{align}
Above, \eqref{eq:un5} is a consequence of the properties of
$\mathcal{D}^{(k)}$.

\end{proof}

\begin{proof}[Proof of Theorem~\ref{prop:decoupled}]

We begin by proving that a system has a causally decoupled feature iff
$\mathcal{G}^{(k)}(\bX; \bY) > 0$. This proof follows a similar structure to
that of Theorem~\ref{prop:downward} for downward causation.

Let us first assume that there exists a supervenient feature $V_t$ with
$\texttt{Un}^{(k)}(V_t; V_{t'} | \bX, \bY) > 0$. Then,
\begin{align}
  0 &< \texttt{Un}^{(k)}(V_t; V_{t'} | \bX, \bY) \nonumber \\
    &\leq \texttt{Un}^{(k)}(\bX; \bY | \bX, \bY) \label{eq:decoup2} \\
    &= \mathcal{G}^{(k)}(\bX; \bY) \label{eq:decoup3} ~ .
\end{align}
Above, \eqref{eq:decoup2} can be obtained through a combination of the data
processing inequalities of the unique information and the synergy, as well as
Lemma~\ref{lemma:unsyn}; and \eqref{eq:decoup3} is an application of
Corollary~\ref{cor:unsyn_phiid}.

To prove the converse, assume that for all supervenient features
$\texttt{Un}^{(k)}(V_t; V_{t'} | \bX, \bY) = 0$. As before, this also includes
$V_t = \bX$. Therefore, applying Corollary~\ref{cor:unsyn_phiid} we arrive at
$\mathcal{G}^{(k)}(\bX; \bY) = \texttt{Un}^{(k)}(\bX; \bY | \bX, \bY) = 0$,
which concludes the proof.

We now move on to prove the results on perfectly causally decoupled systems, where
$\mathcal{G}^{(k)}(\bX;\bY) >0$ and $\mathcal{D}^{(k)}(\bX;\bY) = 0$. As
$\texttt{Syn}^{(k)}(\bX;\bY) = \mathcal{G}^{(k)}(\bX;\bY) +
\mathcal{D}^{(k)}(\bX;\bY) > 0$, thanks to Theorem~\ref{prop:emergence_syn}
this is equivalent to the existence of at least one emergent feature $V_t$.
Additionally, due to Theorem~\ref{prop:downward}, $\mathcal{D}^{(k)}(\bX;\bY) =
0$ is equivalent to $\texttt{Un}^{(k)}(V_t;\bm X_{t'}^{\alpha}|\bX) = 0$ for
all emergent features and $\alpha : |\alpha| = k$

\end{proof}

%%%%%%%%%%%%%%%%%%%%%%%%%%%%%%%%%%%%%%%%%%%%%%%%%%%%%%%%
%%%%%%%%%%%%%%%%%%%%%%%%%%%%%%%%%%%%%%%%%%%%%%%%%%%%%%%%
\section{Mathematical properties of emergence criteria}
\label{app:proofs2}

In this appendix we provide the necessary proofs linking the practical criteria
for emergence in Eqs.~\eqref{eq:emergence_psi} with the definitions in
Section~\ref{sec:formal_theory}. For completeness, we provide formulae of
$\Psi$, $\Delta$, and $\Gamma$ for arbitrary emergence order $k$:
\begin{subequations}
\begin{align}
\Psi_{t,t'}^{(k)}(V)&\coloneqq I(V_t; V_{t'}) - \sum_{|\alpha|=k} I(\bm X_{t}^{\alpha}; V_{t'} )~, \\
\Delta_{t,t'}^{(k)}(V)&\coloneqq \max_{|\alpha|=k} \left( I(V_t; \bm X^{\alpha}_{t'}) - \sum_{|\beta|=k} I(\bm X_{t}^{\beta}; \bm X^{\alpha}_{t'} ) \right)~, \\
\Gamma_{t,t'}^{(k)}(V)&\coloneqq \max_{|\alpha|=k} I(V_t; \bm X_{t'}^{\alpha})~.
\end{align}
\end{subequations}

\begin{proof}[Proof of Proposition~\ref{prop:emergence_psi}]

To start, note that a direct calculation shows that
\begin{align}
  \Psi_{t,t'}^{(k)}(V) 
  & \leq I(\bX; V_{t'}) - \sum_{|\alpha|=k} I(\bm X_{t}^{\alpha}; V_{t'} ) \label{eq:psi1}\\ 
  & \leq \texttt{Syn}^{(k)}(\bX; V_{t'}) \label{eq:psi2} \\
  & \leq \texttt{Syn}^{(k)}(\bX; \bm X_{t'}) \label{eq:psi3} ~.
\end{align}
Above, \eqref{eq:psi1} is due to the data processing inequality applied over
the Markov chain $V_t-\bX-\bY-V_{t'}$; \eqref{eq:psi2} is due to the
whole-minus-sum property of the synergy; and \eqref{eq:psi3} is due to the data
processing inequality of the synergy. Therefore, it is clear that if $
\Psi_{t,t'}^{(k)}(V) >0 $ for some feature $V_t$ then $\texttt{Syn}^{(k)}(\bX;
\bY) > 0$. This, combined with Theorem~\ref{prop:emergence_syn}, guarantees
that the system exhibits causal emergence.

To check the condition for downward causation, a direct calculation shows that,
for some $|\alpha|=k$,
\begin{align}
  \Delta_{t,t'}^{(k)}(V) 
  & \leq I(\bX; \bm X_{t'}^{\alpha}) - \sum_{|\bm \beta|=k} I(\bm X_t^{\beta}; \bm X^{\alpha}_{t'}) \label{eq:d1}\\ 
  & \leq \texttt{Syn}^{(k)}(\bX; \bm X^{\alpha}_{t'}) \label{eq:d2} \\
  & \leq \mathcal{D}^{(k)}(\bX; \bm X_{t'}) \label{eq:d3} ~.
\end{align}
Above, \eqref{eq:d1} is due to the data processing inequality applied over the
Markov chain $V_t-\bX-\bm X_{t'}^{\alpha}$; \eqref{eq:d2} to the
whole-minus-sum property of the synergy; and \eqref{eq:d3} to the
properties of $\mathcal{D}^{(k)}$. From here, is clear that if
$\Delta_{t,t'}^{(k)}(V) >0 $ for some feature $V_t$ then
$\mathcal{D}^{(k)}(\bX; \bY) > 0$. This, combined with
Theorem~\ref{prop:emergence_syn}, guarantees that the system exhibits downward
causation.

Finally, for the condition for causal decoupling it is sufficient to note that
\begin{align}
\Gamma_{t,t'}^{(k)}(V) &= \max_{|\alpha|=k} I(V_t; \bm X_{t'}^{\alpha}) \\
&\geq \texttt{Un}^{(k)}(V_t; \bm X_{t'}^{\alpha} | \bX ) \geq 0~,
\end{align}
where the inequality is due to the bounds of the unique information.
\end{proof}

\begin{proof}[Proof of Proposition~\ref{prop:autoemergence}]
Let us consider $V_t$ a $k$-synergistic observable. Due to stationarity, the
fact that $V_t\in\mathcal{C}_k(\bm X_t)$ implies that
$V_{t'}\in\mathcal{C}_k(\bm X_{t'})$. Using this fact, and noting that $V_t-\bm
X_t - \bm X_{t'} - V_{t'}$ is a Markov chain, it is direct to check that
\begin{equation}\label{eq:g_is_big}
\mathcal{G}_\star^{(k)}(\bX; \bY) \geq I(V_t ; V_{t'}) > 0~,
\end{equation}
Additionally, since $\mathcal{D}_\star^{(k)}(\bX; \bY) \geq 0$,
Eq.~\eqref{eq:g_is_big} implies that $\texttt{Syn}_\star^{(k)}(\bX; \bY) >0$,
proving the desired result.
\end{proof}

%%%%%%%%%%%%%%%%%%%%%%%%%%%%%%%%%%%%%%%%%%%%%%%%%%%%%%%%
%%%%%%%%%%%%%%%%%%%%%%%%%%%%%%%%%%%%%%%%%%%%%%%%%%%%%%%%
\section{Simulation details}
\label{app:sims}

Let us focus first on the Game of Life (GoL) simulations in
Figure~\ref{fig:gol}. For the initial state, two particles of three fixed types
(nothing, a glider, or a lightweight spaceship) were selected at random, and
placed at random positions in a 15x15 square cell array. The GoL evolution rule
was run for 1000 steps, which, in most cases (judged by visual inspection) was
enough for the system to settle on a stable configuration -- which was
typically either nothing, a small number of static structures, or a small
number of particles in non-colliding tracks. To compute the emergent feature
$V_{t'}$, particles were detected by simply pattern-matching the resulting
system state against the known shapes of each particle. We considered five
categories: still lifes, oscillators, gliders, lightweight spacesphips, or
nothing.

For static structures, we used a single symbol to represent all still lifes,
and a single symbol for all oscillators. This particle detector was found to be
very effective, with only 2\% of runs resulting in unrecognised particles. A
total of \num{5e4} independent runs were simulated and, due to the high number
of possible states of $V_t$, to reduce bias we used the quasi-Bayesian
estimator by Archer \emph{et al.}~\cite{archer2013bayesian}.

For the boids simulation, $N=10$ boids were simulated on a torus of side length
$L=200$. Boids are initialised with random positions, speeds, and head angles,
and at each timestep each boid $i=1, ..., N$ is updated according to the
equations:
\begin{align*}
  x^i_{t'} &= x^i_t + s^i \cos (\alpha^i_t) \\
  y^i_{t'} &= y^i_t + s^i \sin (\alpha^i_t) \\
  \alpha^i_{t'} &= \alpha^i_t + a_1 \theta_1 + a_2 (\pi + \theta_2) + a_3 \theta_3 ~ ,
\end{align*}
Where $\theta_1$ is the bearing to the flock's center of mass, $\theta_2$ the
bearing to the nearest boid, and $\theta_3$ the mean alignment of all boids
within a 20 unit radius. The scalars $a_1,a_2,a_3$ are the aggregation,
avoidance, and alignment parameters, respectively.

The results in Figure~\ref{fig:boids} were obtained averaging $\Psi$ across 25
independent runs of 5000 timesteps each, keeping $a_1 = 0.15$, $a_3 = 0.25$
fixed across all simulations. To compute $\Psi$, we pre-processed the
trajectories using the same procedure as Seth~\cite{seth2010measuring} (i.e.
each boid was described by its distance to the center of the environment and
all time series were first-order differenced), and information-theoretic
quantities were computed using the non-parametric estimator implemented in the
JIDT toolbox~\cite{kraskov2004estimating,lizier2014jidt} using a dynamic
correlation exclusion window of 10 samples~\cite{kantz2004nonlinear}.

%%%%%%%%%%%%%%%%%%%%%%%%%%%%%%%%%%%%%%%%%%%%%%%%%%%%%%%%
%%%%%%%%%%%%%%%%%%%%%%%%%%%%%%%%%%%%%%%%%%%%%%%%%%%%%%%%
\section{ECoG preprocessing and decoding}
\label{app:ecog}

ECoG signals were preprocessed following the steps presented in the original
publication: pecifically, the data was notch-filtered to remove line noise and
band-pass filtered from \SIrange{0.1}{600}{\hertz}, and then downsampled to the
same sampling frequency as the MoCap data
(\SI{120}{\hertz})~\cite{chao2010long}. Then, data were divided into a 60/40
train/test split. To build the predictor, the training set was first
standardised to zero mean and unit variance, and then it was
dimensionality-reduced with a 20-component Partial Least Squares (PLS)
regression using the three dimensions of the MoCap as dependent variables. We
then trained a non-linear Support Vector Machine (SVM)~\cite{bishop2006pattern}
using a squared exponential kernel predicting the MoCap data from the
20-dimensional PLS scores. SVM and kernel hyperparameters were optimised
through 5-fold cross-validation, and a model with the best hyperparameter
configuration was trained on the full training set. This cross-validation and
optimisation procedure was repeated for each dimension of the MoCap data,
resulting in three separate SVMs~\footnote{Note that the accuracy of this model
is lower than the model presented in Ref.~\cite{chao2010long}, for a precise
reason: to predict position at time $t$, Chao \emph{et al.} use a wavelet
transform to obtain features that have represent ECoG from
$t-$\SI{1.1}{\second} to $t$. Unlike Chao \emph{et al.}, we work under the
constraint of using supervenient features: for the axioms of the theory (in its
current form) to hold, $V_t$ needs to be a function \emph{of} \bX \emph{only}.
This limits the possibilities for feature extraction, and naturally may result
in lower-accuracy decoders.}.

The computation of $\Psi$ was performed on the held-out test data set. ECoG
signals were standardised and projected onto the PLS latent space (using the
means, standard deviations, and projection matrix obtained from the training
data), and mutual information terms involved in the computation of $\Psi$ were
calculated using the open-source JIDT toolbox~\cite{lizier2014jidt}.

% Generated by IEEEtran.bst, version: 1.14 (2015/08/26)

\end{document}